\documentclass{IEEEtran}
\usepackage{layouts} 
\DeclareMathAlphabet{\mathsfb}{OT1}{cmss}{bx}{n}
\usepackage{amsmath}
\usepackage{amsfonts}

\usepackage[ruled,vlined,linesnumbered]{algorithm2e}
\usepackage{graphicx}
\usepackage{hyperref}
\newcommand{\Arikan}{Ar\i kan}

\newcommand{\myprobb}[1]{\mathbb{P}(#1)}
\newcommand{\myset}[1]{\left\{#1\right\}}
\newcommand{\mysett}[1]{\{#1\}}

\newcommand{\myrmax}[1]{\max \left(#1\right)}

\newcommand{\mycount}[1]{|#1|}

\newcommand{\bfu}{\mathbf{u}}
\newcommand{\sfbu}{\mathsfb{u}}
\newcommand{\sfbU}{\mathsfb{U}}
\newcommand{\sfu}{\mathsf{u}}
\newcommand{\sfU}{\mathsf{U}}
\newcommand{\sfbc}{\mathsfb{c}}
\newcommand{\sfc}{\mathsf{c}}
\newcommand{\sfby}{\mathsfb{y}}
\newcommand{\sfbY}{\mathsfb{Y}}
\newcommand{\sfy}{\mathsf{y}}
\newcommand{\bfy}{\mathbf{y}}

\newcommand{\subscriptodd}{\mathrm{odd}}
\newcommand{\subscripteven}{\mathrm{even}}
\newcommand{\xor}{\oplus}
\newcommand{\calY}{\mathcal{Y}}
\newcommand{\calX}{\mathcal{X}}
\newcommand{\phaseAndBranch}[2]{\langle #2 \rangle_{#1}}
\newcommand{\arrayPhaseAndBranch}[3]{#1_{#2}[\langle #3 \rangle]}
\newcommand{\arrayAndBranch}[3]{#1_{#2}[#3]}
\newcommand{\layerVar}{\lambda}
\newcommand{\layerCapitalVar}{\Lambda}
\newcommand{\phaseVar}{\varphi}
\newcommand{\branchVar}{\beta}
\newcommand{\floor}[1]{\lfloor #1 \rfloor}
\newcommand{\floorr}[1]{\left\lfloor #1 \right\rfloor}

\newcommand{\genericReceived}{y}
\newcommand{\bfgenericReceived}{\bfy}
\newcommand{\numberCalls}[2]{\#^{(#1)}_{#2}}

\newtheorem{theo}{Theorem}
\newtheorem{lemm}[theo]{Lemma}
\newtheorem{defi}{Definition}

\newcommand{\twobibs}[2]{#2} 

\begin{document}
\title{List Decoding of Polar Codes}
\author{\authorblockN{Ido Tal \qquad Alexander Vardy}\\
\authorblockA{
University of California San Diego,\\
La Jolla, CA 92093, USA\\
Email: {\tt idotal@ieee.org}, {\tt avardy@ucsd.edu}
}}
\maketitle
\SetAlFnt{\small}
\SetKwInOut{Require}{Require} 
\SetKwFunction{getArrayPointerP}{getArrayPointer\_P}
\SetKwFunction{getArrayPointerC}{getArrayPointer\_C}
\SetKwFunction{assignInitialPath}{assignInitialPath}
\SetKwFunction{pathIndexInactive}{pathIndexInactive}
\SetKwFunction{continuePathsFrozenBit}{continuePaths\_FrozenBit}
\SetKwFunction{continuePathsUnfrozenBit}{continuePaths\_UnfrozenBit}
\SetKwFunction{recursivelyCalcP}{recursivelyCalcP}
\SetKwFunction{recursivelyUpdateB}{recursivelyUpdateB}
\SetKwFunction{recursivelyUpdateC}{recursivelyUpdateC}
\SetKwFunction{killPath}{killPath}
\SetKwFunction{clonePath}{clonePath}
\SetKwFunction{findMostProbablePath}{findMostProbablePath}
\SetKwFunction{initializeDataStructures}{initializeDataStructures}
\SetKwData{forksArray}{forksArray}
\SetKwData{contForks}{contForks}
\SetKwData{probForks}{probForks}
\SetKwData{inactivePathIndices}{inactivePathIndices}
\SetKwData{inactiveArrayIndices}{inactiveArrayIndices}
\SetKwData{arrayReferenceCount}{arrayReferenceCount}
\SetKwData{pathIndexToArrayIndex}{pathIndexToArrayIndex}
\SetKwData{arrayPointerP}{arrayPointer\_P}
\SetKwData{arrayPointerC}{arrayPointer\_C}
\SetKwData{activePath}{activePath}
\SetKw{Continue}{continue}
\SetKw{New}{new}
\SetKw{Push}{push}
\SetKw{Pop}{pop}
\SetKw{False}{false}
\SetKw{True}{true}
\SetKw{myand}{and}
\SetKw{myor}{or}
\DontPrintSemicolon
\SetAlgoSkip{-5pt}
\newcommand{\alghfill}{\hfill\hfill\hfill\hfill\hfill\hfill\hfill\hfill\hfill\hfill\hfill\hfill\hfill\hfill\hfill\hfill\hfill\hfill\hfill\hfill\hfill\hfill\hfill\hfill}
\begin{abstract}
We describe a successive-cancellation \emph{list} decoder for polar codes, which is a generalization of the classic successive-cancellation decoder of Ar{\i}kan. In the proposed list decoder, up to $L$ decoding paths are considered concurrently at each decoding stage. Then, a single codeword is selected from the list as output. If the most likely codeword is selected, simulation results show that the resulting performance is very close to that of a maximum-likelihood decoder, even for moderate values of $L$. Alternatively, if a ``genie'' is allowed to pick the codeword from the list, the results are comparable to the current state of the art LDPC codes. Luckily, implementing such a helpful genie is easy. 

Our list decoder doubles the number of decoding paths at each decoding step, and then uses a pruning procedure to discard all but the $L$ ``best'' paths. 
Nevertheless, a straightforward implementation still requires $\Omega(L \cdot n^2)$  time, which is in stark contrast with the $O(n \log n)$ complexity of the original successive-cancellation decoder. We utilize the structure of polar codes to overcome this problem. Specifically, we devise an efficient, numerically stable, implementation taking only $O(L \cdot n \log n)$ time and $O(L \cdot n)$ space. 
\end{abstract}

\addtolength{\textfloatsep}{-7mm}
\section{Introduction}
Polar codes, recently discovered by \Arikan\ \cite{Arikan:09p}, are a major breakthrough in coding theory. They are the first and currently only family of codes known to have an explicit construction (no ensemble to pick from) and efficient encoding and decoding algorithms, while also being capacity achieving over binary input symmetric memoryless channels. Their probability of error is known to approach $O(2^{-\sqrt{n}})$ \cite{ArikanTelatar:09c}, with generalizations giving even better asymptotic results \cite{KSU:10p}.

Of course, ``capacity achieving'' is an asymptotic property, and the main sticking point of polar codes to date is that their performance at short to moderate block lengths is disappointing. As we ponder why, we identify two possible culprits: either the codes themselves are inherently weak at these lengths, or the successive cancellation (SC) decoder employed to decode them is significantly degraded with respect to Maximum Likelihood (ML) decoding performance. More so, the two possible culprits are complementary, and so both may occur.

In this paper we show an improvement to the SC decoder, namely, a successive cancellation list (SCL) decoder. Our list decoder has a corresponding \emph{list size} $L$, and setting $L=1$ results in the classic SC decoder. It should be noted that the word ``list'' was chosen as part of the name of our decoder in order to highlight a key concept relating to the inner working of it. However, when our algorithm finishes, it returns a \emph{single} codeword.

\begin{figure}
\begin{center}
\includegraphics[scale=0.50]{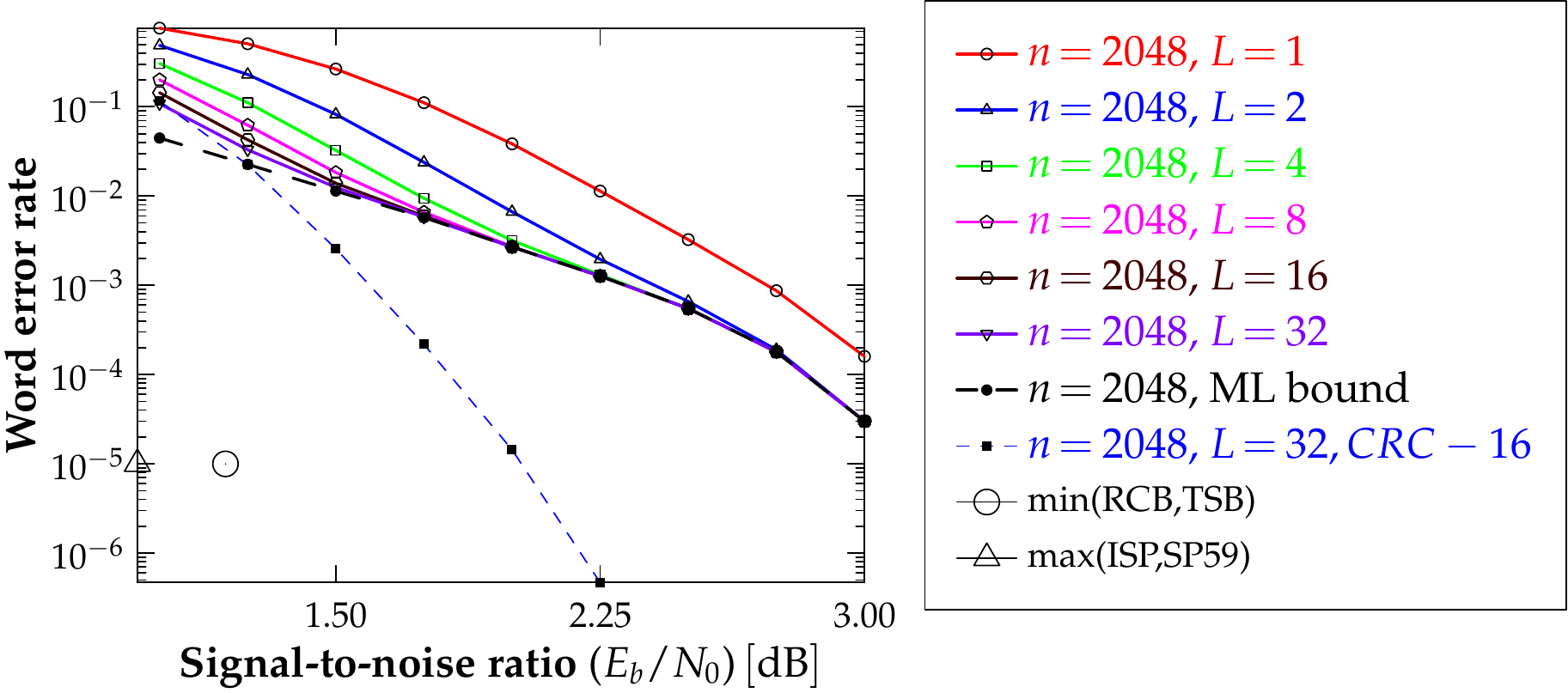}

%
\caption{Word error rate of a length $n=2048$ rate $1/2$ polar code optimized for SNR=$2$ dB under various list sizes. Code construction was carried out via the method proposed in \cite{TalVardy:11a}. The two dots represent upper and lower bounds \cite{WiechmanSason:08p} on the SNR needed to reach a word error rate of $10^{-5}$.}
\label{fig:WER}
\end{center}
\end{figure}

The solid lines in Figure~\ref{fig:WER} corresponds to choosing the most likely codeword from the list as the decoder output. As can be seen, this choice of the most likely codeword results in a large range in which our algorithm has performance very close to that of the ML decoder, even for moderate values of $L$. Thus, the sub-optimality of the SC decoder indeed does plays a role in the disappointing performance of polar codes. 

\begin{figure}
\begin{center}
\includegraphics[scale=0.45]{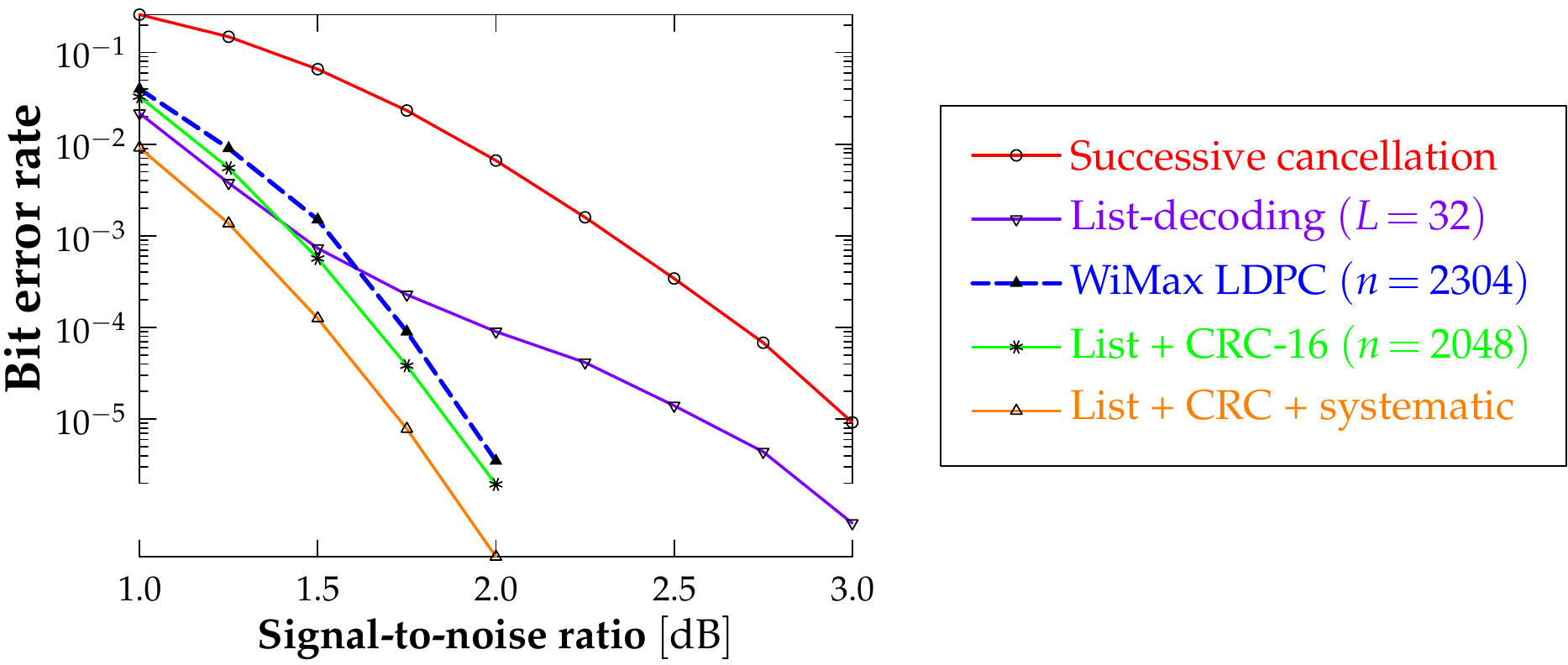}
\caption{Comparison of our polar coding and decoding schemes to an implementation of the WiMax standard take from \cite{Turbobest:00u}. All codes are rate $1/2$. The length of the polar code is $2048$ while the length of the WiMax code is $2304$. The list size used was $L=32$. The CRC used was $16$ bits long.}
\label{fig:WiMax}
\end{center}
\end{figure}

Even with the above improvement, the performance of polar-codes falls short. Thus, we conclude that polar-codes themselves are weak. Luckily, we can do better. Suppose that instead of picking the most likely codeword from the list, a ``genie'' would aid us by telling us what codeword in the list was the transmitted codeword (if the transmitted codeword was indeed present in the list). Luckily, implementing such a genie turns out to be simple, and entails a slight modification of the polar code. With this modification, the performance of polar codes is comparable to state of the art LDPC codes, as can be seen in Figure~\ref{fig:WiMax}.

In fairness, we refer to Figure~\ref{fig:PPV} and note that there are LDPC codes of length $2048$ and rate $1/2$ with better performance than our polar codes. However, to the best of our knowledge, for length $1024$ and rate $1/2$ it seems that our implementation is slightly better than previously known codes when considering a target error-probability of $10^{-4}$.

\begin{figure}
\begin{center}
\includegraphics[scale=0.30]{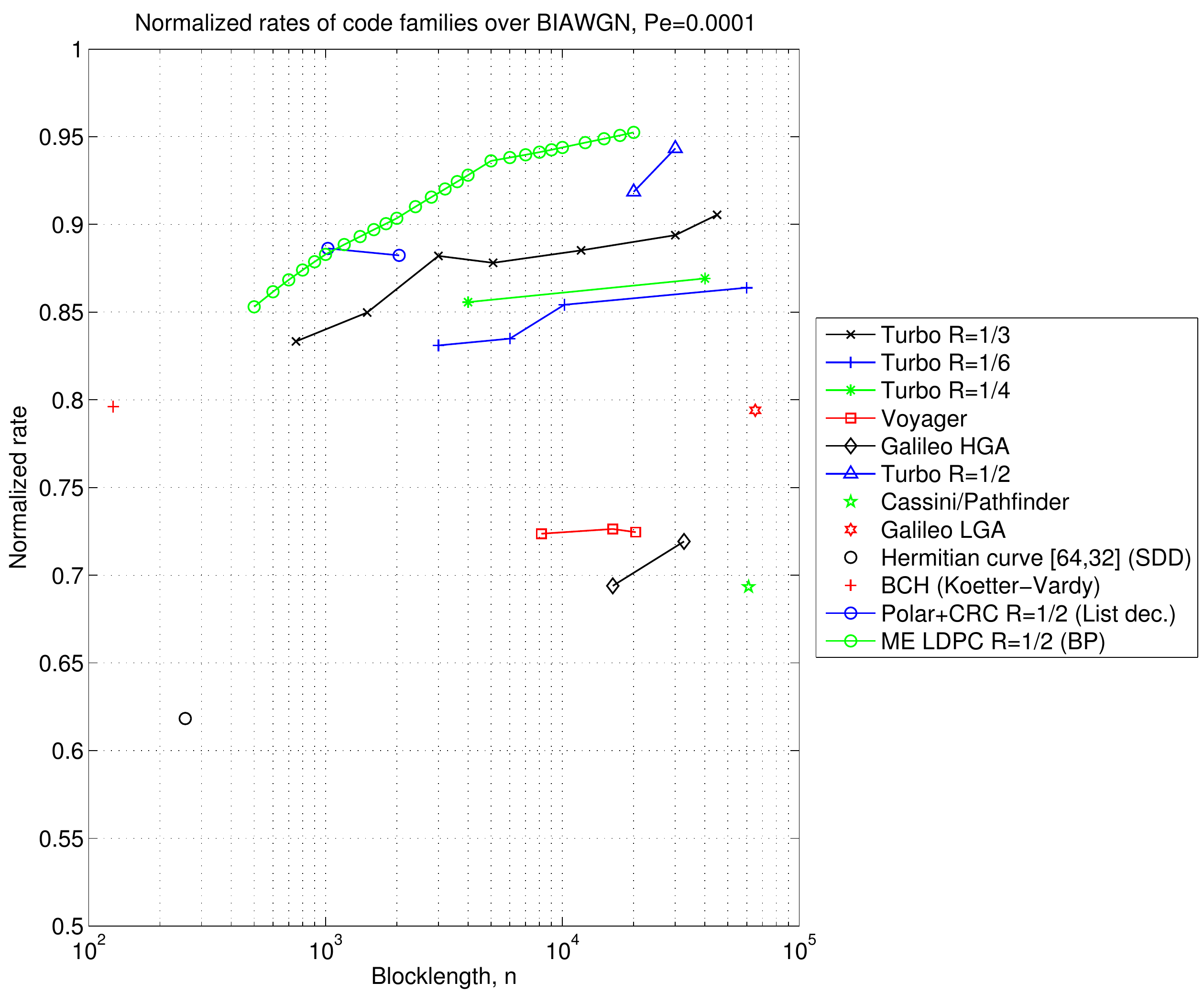}
\caption{Comparison of normalized rate \cite{PPV:10p} for a wide class of codes. The target word error rate is $10^{-4}$. The plot is courtesy of Dr.\ Yury Polyanskiy.}
\label{fig:PPV}
\end{center}
\end{figure}

The structure of this paper is as follows. In Section~\ref{sec:SC}, we present \Arikan's SC decoder in a notation that will be useful to us later on. In Section~\ref{sec:SCnspace}, we show how the space complexity of the SC decoder can be brought down from $O(n \log n)$ to $O(n)$. This observation will later help us in Section~\ref{sec:SCList}, where we presents our successive cancellation list decoder with time complexity $O(L\cdot n \log n)$. Section~\ref{sec:listCRC} introduces a modification of polar codes which, when decoded with the SCL decoder, results in a significant improvement in terms of error rate.

This paper contains a fair amount of algorithmic detail. Thus, on a first read, we advise the reader to skip to Section~\ref{sec:SCList} and read the first three paragraphs. Doing so will give a high-level understanding of the decoding method proposed and also show why a naive implementation is too costly. Then, we advise the reader to skim Section~\ref{sec:listCRC} where the ``list picking genie'' is explained.

\section{Formalization of the Successive Cancellation Decoder}
\label{sec:SC}

The Successive Cancellation (SC) decoder is due to \Arikan\ \cite{Arikan:09p}. In this section, we recast it using our notation, for future reference. 

Let the polar code under consideration have length $n=2^m$ and dimension $k$. Thus, the number of frozen bits is $n-k$. We denote by $\sfbu = (\sfu_i)_{i=0}^{n-1} = \sfu_0^{n-1}$ the information bits vector (including the frozen bits), and by $\sfbc=\sfc_{0}^{n-1}$ the corresponding codeword, which is sent over a binary-input channel $W:\calX \to \calY$, where $\calX=\{0,1\}$. At the other end of the channel, we get the received word $\sfby = \sfy_{0}^{n-1}$. A decoding algorithm is then applied to $\sfby$, resulting in a decoded codeword $\hat{\sfbc}$ having corresponding information bits $\hat{\sfbu}$.

\subsection{An outline of Successive Cancellation}
A high-level description of the SC decoding algorithm is given in Algorithm~\ref{alg:highLevel}. In words, at each \emph{phase} $\phaseVar$ of the algorithm, we must first calculate the pair of probabilities $W_m^{(\phaseVar)}(\sfby_0^{n-1}, \hat{\sfbu}_0^{\phaseVar-1}|0)$ and $W_m^{(\phaseVar)}(\sfby_0^{n-1}, \hat{\sfbu}_0^{\phaseVar-1}|1)$, defined shortly. Then, we must make a decision as to the value of $\hat{\sfu}_\phaseVar$ according to the pair of probabilities.

\begin{algorithm}
\SetInd{0.49em}{0.49em}
\caption{A high-level description of the SC decoder}
\label{alg:highLevel}
\KwIn{the received vector $\sfby$}
\KwOut{a decoded codeword $\hat{\sfbc}$}
\BlankLine
\For{$\phaseVar = 0,1,\ldots, n-1$}
{
  calculate $W_m^{(\phaseVar)}(\sfby_0^{n-1}, \hat{\sfbu}_0^{\phaseVar-1}|0)$ and $W_m^{(\phaseVar)}(\sfby_0^{n-1}, \hat{\sfbu}_0^{\phaseVar-1}|1)$\; \label{alg:highLevel:calcW}
  \eIf{$\sfu_\phaseVar$ is frozen}
  {
    set $\hat{\sfu}_\phaseVar$ to the frozen value of $\sfu_\phaseVar$ \;
  }
  {
    \eIf{$W_m^{(\phaseVar)}(\sfby_0^{n-1}, \hat{\sfbu}_0^{\phaseVar-1}|0) > W_m^{(\phaseVar)}(\sfby_0^{n-1}, \hat{\sfbu}_0^{\phaseVar-1}|1)$}
    {
      set $\hat{\sfu}_\phaseVar \gets 0$\;
    }
    {
      set $\hat{\sfu}_\phaseVar \gets 1$ \;
    }
  }
}
\Return the codeword $\hat{\sfbc}$ corresponding to $\hat{\sfbu}$
\end{algorithm}

We now show how the above probabilities are calculated. For \emph{layer} $0 \leq \layerVar \leq m$, denote hereafter
\begin{equation}
\layerCapitalVar = 2^\layerVar \; .
\end{equation}
Recall \cite{Arikan:09p} that for 
\begin{equation}
0 \leq \phaseVar < \layerCapitalVar \; ,
\end{equation}
\emph{bit channel} $W_{\layerVar}^{(\phaseVar)}$  is a binary input channel with output alphabet $\calY^{\layerCapitalVar} \times \calX^{\phaseVar}$, the conditional probability of which we generically denote as 
\begin{equation}
W_\layerVar^{(\phaseVar)}(\bfgenericReceived_0^{\layerCapitalVar-1}, \bfu_{0}^{\phaseVar-1}|u_\phaseVar) \; .
\end{equation}
In our context, $\bfgenericReceived_0^{\layerCapitalVar-1}$ is always a contiguous subvector of received vector $\sfby$. Next, for $1 \leq \layerVar \leq m$, recall the recursive definition of a bit channel \cite[Equations (22) and (23)]{Arikan:09p} : let $0 \leq 2\psi < \layerCapitalVar$, then
\begin{multline}
\label{eq:ArikanFirstRecursive}
\overbrace{W_{\layerVar}^{(2\psi)}(\bfgenericReceived_0^{\layerCapitalVar-1},\bfu_0^{2\psi-1}|u_{2\psi})}^{\mbox{branch $\beta$}} \\
= \sum_{u_{2\psi+1}} \frac{1}{2}
\underbrace{W_{\layerVar-1}^{(\psi)}(\bfgenericReceived_0^{\layerCapitalVar/2-1}, \bfu_{0,\subscripteven}^{2\psi-1} \xor \bfu_{0,\subscriptodd}^{2\psi-1}|u_{2\psi} \xor u_{2\psi+1})}_{\mbox{branch $2\beta$}}\\
\cdot \underbrace{W_{\layerVar-1}^{(\psi)}(\bfgenericReceived_{\layerCapitalVar/2}^{\layerCapitalVar-1}, \bfu_{0,\subscriptodd}^{2\psi-1}|u_{2\psi+1})}_{\mbox{branch $2\beta+1$}}
\end{multline}
and
\begin{multline}
\label{eq:ArikanSecondRecursive}
\overbrace{W_{\layerVar}^{(2\psi+1)}(\bfgenericReceived_0^{\layerCapitalVar-1},\bfu_0^{2\psi}|u_{2\psi+1})}^{\mbox{branch $\beta$}} \\
= \frac{1}{2}
\underbrace{W_{\layerVar-1}^{(\psi)}(\bfgenericReceived_0^{\layerCapitalVar/2-1}, \bfu_{0,\subscripteven}^{2\psi-1} \xor \bfu_{0,\subscriptodd}^{2\psi-1}|u_{2\psi} \xor u_{2\psi+1})}_{\mbox{branch $2 \beta$}}\\
\cdot \underbrace{W_{\layerVar-1}^{(\psi)}(\bfgenericReceived_{\layerCapitalVar/2}^{\layerCapitalVar-1}, \bfu_{0,\subscriptodd}^{2\psi-1}|u_{2\psi+1})}_{\mbox{branch $2\beta+1$}}
\end{multline}
with ``stopping condition'' $W_0^{(0)}(\genericReceived|u) = W(\genericReceived|u)$.

\subsection{Detailed description}
For Algorithm~\ref{alg:highLevel} to become concrete, we must specify how the probability pair associated with $W_m^{(\phaseVar)}$ is calculated, and how the set values of $\hat{\sfbu}$, namely $\hat{\sfbu}_0^{\phaseVar-1}$, are propagated into those calculations. We now show an implementation that is straightforward, yet somewhat wasteful in terms of space.

For $\layerVar>0$ and $0 \leq \phaseVar < \layerCapitalVar$, recall the recursive definition of $W_\layerVar^{(\phaseVar)}(\bfgenericReceived_0^{\Lambda-1}, \bfu_0^{\phaseVar-1}|u_\phaseVar)$ given in either (\ref{eq:ArikanFirstRecursive}) or (\ref{eq:ArikanSecondRecursive}), depending on the parity of $\phaseVar$. For either  $\phaseVar=2\psi$ or $\phaseVar=2\psi+1$, the channel $W_{\layerVar-1}^{(\psi)}$ is evaluated with output $(\bfgenericReceived_0^{\Lambda/2-1}, \bfu_{0,\subscripteven}^{2\psi-1} \xor \bfu_{0,\subscriptodd}^{2\psi-1})$, as well as with output  $(\bfgenericReceived_{\Lambda/2}^{\Lambda-1}, \bfu_{0,\subscriptodd}^{2\psi-1})$. Since our algorithm will make use of these recursions, we need a simple way of defining which output we are referring to. We do this by specifying, apart from the layer $\layerVar$ and the phase $\phaseVar$ which define the channel, the \emph{branch} number 
\begin{equation}
\label{eq:branchNumber}
0 \leq \branchVar < 2^{m-\lambda} \; . 
\end{equation}

Since, during the run of the SC algorithm, the channel $W_m^{(\phaseVar)}$ is only evaluated with a single output, $(\sfby_0^{n-1},\hat{\sfbu}_0^{\phaseVar-1})$, we give a branch number of $\branchVar = 0$ to each such output.  Next, we proceed recursively as follows. For $\layerVar > 0$, consider a channel $W_\layerVar^{(\phaseVar)}$ with output $(\bfgenericReceived_0^{\Lambda-1}, \hat{\bfu}_0^{\phaseVar-1})$ and corresponding branch number $\branchVar$. Denote $\psi = \floor{\phaseVar/2}$. The output $(\bfgenericReceived_0^{\Lambda/2-1}, \hat{\bfu}_{0,\subscripteven}^{2\psi-1} \xor \hat{\bfu}_{0,\subscriptodd}^{2\psi-1})$ associated with $W_{\layerVar-1}^{(\psi)}$ will have a branch number of $2\branchVar$, while the output $(\bfgenericReceived_{\Lambda/2}^{\Lambda-1}, \hat{\bfu}_{0,\subscriptodd}^{2\psi-1})$ will have a branch number of $2\branchVar+1$. Finally, we mention that for the sake of brevity, we will talk about the output corresponding to \emph{branch $\beta$ of a channel}, although this is slightly inaccurate. 


We now introduce our first data structure. For each layer $0 \leq \layerVar \leq m$, we will have a \emph{probabilities array}, denoted by $P_\layerVar$, indexed by an integer $0 \leq i < 2^m$ and a bit $b \in \myset{0,1}$. For a given layer $\layerVar$, an index $i$ will correspond to a phase $0 \leq \phaseVar < \layerCapitalVar$ and branch $0 \leq \branchVar < 2^{m-\layerVar}$ using the following quotient/reminder representation.
\begin{equation}
\label{eq:phaseBranchIndexing}
i = \phaseAndBranch{\layerVar}{\phaseVar,\branchVar} = \phaseVar + 2^{\layerVar} \cdot \branchVar \; .
\end{equation}
In order to avoid repetition, we use the following shorthand
\begin{equation}
\label{eq:arrayShorthand}
\arrayPhaseAndBranch{P}{\layerVar}{\phaseVar,\branchVar} = P_\layerVar[\phaseAndBranch{\layerVar}{\phaseVar,\branchVar}] 
\; .
\end{equation}

The probabilities array data structure $P_\layerVar$ will be used as follows.  Let a layer $0 \leq \layerVar \leq m$, phase $0 \leq \phaseVar < \layerCapitalVar$, and branch $0 \leq \branchVar < 2^{m-\lambda}$ be given. Denote the output corresponding to branch $\branchVar$ of $W_\layerVar^{(\phaseVar)}$ as $(\bfgenericReceived_0^{\layerCapitalVar-1}, \hat{\bfu}_{0}^{\phaseVar-1})$. Then, ultimately, we will have for both values of $b$ that 
\begin{equation}
\label{eq:probArray}
\arrayPhaseAndBranch{P}{\layerVar}{\phaseVar,\branchVar}[b]= W_\layerVar^{(\phaseVar)}(\bfgenericReceived_0^{\layerCapitalVar-1}, \hat{\bfu}_{0}^{\phaseVar-1}|b) \; . 
\end{equation}

Analogously to defining the output corresponding to a branch $\branchVar$, we would now like define the input corresponding to a branch. As in the ``output'' case, we start at layer $m$ and continue recursively. 
Consider the channel $W_m^{(\phaseVar)}$, and let $\hat{\sfu}_\phaseVar$ be the corresponding input which Algorithm~\ref{alg:highLevel} assumes. We let this input have a branch number of $\branchVar = 0$. Next, we proceed recursively as follows. For layer $\layerVar > 0$, consider the channels $W_\layerVar^{(2 \psi )}$ and $W_\layerVar^{(2 \psi +1 )}$ having the same branch $\branchVar$ with corresponding inputs $u_{2 \psi}$ and $u_{2 \psi +1 }$, respectively. In light of (\ref{eq:ArikanSecondRecursive}), we now consider $W_{\layerVar-1}^{(\psi)}$ and define the input corresponding to branch $2 \branchVar$  as $u_{2 \psi} \xor u_{2 \psi +1 }$. Likewise, we define the input corresponding to branch $2 \branchVar+1$ as $u_{2 \psi+1}$. Note that under this recursive definition, we have that for all $0 \leq \layerVar \leq m$, $0 \leq \phaseVar < \layerCapitalVar$, and $0 \leq \branchVar < 2^{m-\lambda}$, the input corresponding to branch $\branchVar$ of $W_\layerVar^{(\phaseVar)}$ is well defined.  

The following lemma points at the natural meaning that a branch number has at layer $\layerVar=0$. It is proved using a straightforward induction.
\begin{lemm}
Let $\sfby$ and $\hat{\sfbc}$ be as in Algorithm~\ref{alg:highLevel}, the received vector and the decoded codeword. Consider layer $\layerVar=0$, and thus set $\phaseVar=0$. Next, fix a branch number $0 \leq \branchVar < 2^n$. Then, the input and output corresponding to branch $\branchVar$ of $W_0^{(0)}$ are $\sfy_\branchVar$ and $\hat{\sfc}_\branchVar$, respectively. 
\end{lemm}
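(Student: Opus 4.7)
The plan is to prove both halves of the claim simultaneously by downward induction on the layer $\lambda$, starting from $\lambda = m$ (where the statement is true by fiat) and descending to $\lambda = 0$. The induction hypothesis, at layer $\lambda$, will be a description of the output and input corresponding to branch $\beta$ of $W_\lambda^{(\varphi)}$ in terms of contiguous subvectors of $\sfby$ and $\hat{\sfbc}$, respectively.

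For the \emph{output} half, the base case $\lambda = m$ is immediate: there is only one branch, $\beta = 0$, and by construction its output is $(\sfby_0^{n-1}, \hat{\sfbu}_0^{\varphi-1})$. For the inductive step, I would show that the output associated with branch $\beta$ of $W_\lambda^{(\varphi)}$ is always of the form $(\sfy_{\beta \Lambda}^{(\beta+1)\Lambda - 1}, \dots)$, i.e.\ the received window is the length-$\Lambda$ contiguous block of $\sfby$ starting at position $\beta \Lambda$. The recursive splitting in the paragraph preceding the lemma sends the first half of this window to branch $2\beta$ and the second half to branch $2\beta+1$ at layer $\lambda-1$, and one easily checks that $2\beta \cdot (\Lambda/2) = \beta \Lambda$ and $(2\beta+1)\cdot(\Lambda/2) = \beta \Lambda + \Lambda/2$, so the invariant is preserved. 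Descending to $\lambda = 0$ gives $\Lambda = 1$ and a length-$1$ window starting at position $\beta$, namely $\sfy_\beta$.

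For the \emph{input} half, the strategy is analogous, but now one notes that the recursion defining the input corresponding to a branch is precisely one stage of \Arikan's polar-encoding butterfly: from the pair of inputs $(u_{2\psi}, u_{2\psi+1})$ at layer $\lambda$, we send $u_{2\psi} \oplus u_{2\psi+1}$ to branch $2\beta$ and $u_{2\psi+1}$ to branch $2\beta+1$ at layer $\lambda-1$. The base at layer $m$ lists $\hat{\sfu}_0, \hat{\sfu}_1, \dots, \hat{\sfu}_{n-1}$ as the inputs across the $n$ phases at branch $0$. Iterating the butterfly all the way down to layer $0$ is, by the very definition of the polar transform, the encoding map $\hat{\sfbu} \mapsto \hat{\sfbc}$. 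Hence at layer $0$ the single input associated with branch $\beta$ is the $\beta$-th coordinate of $\hat{\sfbc}$.

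The one place that will need care is bookkeeping the indexing: specifically, verifying that the recursive halving of the received window lines up with the identity $i = \varphi + 2^\lambda \beta$ from (\ref{eq:phaseBranchIndexing}), and that the input recursion is genuinely a butterfly stage of the Ar\i kan transform rather than its transpose. Both are routine once the invariants are stated correctly, so the proof really is a direct induction with no substantive obstacle.
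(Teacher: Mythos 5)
Your proof is correct and is precisely the ``straightforward induction'' that the paper invokes without spelling out: a downward induction on $\layerVar$ tracking the received window $\sfy_{\branchVar\layerCapitalVar}^{(\branchVar+1)\layerCapitalVar-1}$ for the outputs and recognizing the input recursion as one stage of the \Arikan\ encoding butterfly. Nothing further is needed, though note the bound on $\branchVar$ in the statement should read $0\leq\branchVar<2^m=n$ rather than $2^n$, consistent with (\ref{eq:branchNumber}).
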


We now introduce our second, and last, data structure for this section. For each layer $0 \leq \layerVar \leq m$, we will have a \emph{bit array}, denoted by $B_\layerVar$, and indexed by an integer $0 \leq i < 2^m$, as in (\ref{eq:phaseBranchIndexing}). The data structure will be used as follows.  Let layer $0 \leq \layerVar \leq m$, phase $0 \leq \phaseVar < \layerCapitalVar$, and  branch $0 \leq \branchVar < 2^{m-\lambda}$ be given. Denote the input corresponding to branch $\branchVar$ of $W_\layerVar^{(\phaseVar)}$ as $\hat{u}(\layerVar,\phaseVar, \branchVar)$. Then, ultimately, 
\begin{equation}
\label{eq:bitArray}
\arrayPhaseAndBranch{B}{\layerVar}{\phaseVar,\branchVar} = \hat{u}(\layerVar,\phaseVar, \branchVar) \; , 
\end{equation}
where we have used the same shorthand as in (\ref{eq:arrayShorthand}). Notice that the total memory consumed by our algorithm is $O(n \log n)$.

Our first implementation of the SC decoder is given as Algorithms~\ref{alg:firstImplementation_main}--\ref{alg:firstImplementation_updateB}. The main loop is given in Algorithm~\ref{alg:firstImplementation_main}, and follows the high-level description given in Algorithm~\ref{alg:highLevel}. Note that the elements of the probabilities arrays $P_\layerVar$ and bit array $B_\layerVar$ start-out uninitialized, and become initialized as the algorithm runs its course. The code to initialize the array values is given in Algorithms~\ref{alg:firstImplementation_calcP} and \ref{alg:firstImplementation_updateB}.

\begin{algorithm}
\caption{First implementation of SC decoder}
\label{alg:firstImplementation_main}
\KwIn{the received vector $\sfby$}
\KwOut{a decoded codeword $\hat{\sfbc}$}
\BlankLine
\For(\tcp*[h]{Initialization}){$\branchVar = 0,1,\ldots, n-1$}
{
  $\arrayPhaseAndBranch{P}{0}{0,\branchVar}[0] \gets W(\sfy_\branchVar|0)$, $\arrayPhaseAndBranch{P}{0}{0,\branchVar}[1] \gets W(\sfy_\branchVar|1)$\;
}
\For(\tcp*[h]{Main loop}){$\phaseVar = 0,1,\ldots, n-1$}
{ 
  $\recursivelyCalcP(m, \phaseVar)$\; \label{alg:firstImplementation_main:recursivelyCalcP}
  \eIf{$\sfu_\phaseVar$ is frozen}
  {
    set $\arrayPhaseAndBranch{B}{m}{\phaseVar,0}$ to the frozen value of $\sfu_\phaseVar$ \;
  }
  {
    \eIf{$\arrayPhaseAndBranch{P}{m}{\phaseVar,0}[0] > \arrayPhaseAndBranch{P}{m}{\phaseVar,0}[1]$}
    { \label{alg:firstImplementation_main:unfrozenStart}
      set $\arrayPhaseAndBranch{B}{m}{\phaseVar,0} \gets 0$ \;
    }
    {
      set $\arrayPhaseAndBranch{B}{m}{\phaseVar,0} \gets 1$ \; \label{alg:firstImplementation_main:unfrozenEnd}
    }
  }
  \If{$\phaseVar \mod 2 = 1$}
  {
    $\recursivelyUpdateB(m, \phaseVar)$\;
  }
}
\Return the decoded codeword: $\hat{\sfbc}=\left(\arrayPhaseAndBranch{B}{0}{0,\branchVar}\right)_{\branchVar=0}^{n-1}$\;
\end{algorithm}
\begin{algorithm}
\SetInd{0.52em}{0.52em}
\caption{recursivelyCalcP$(\layerVar, \phaseVar)$\hfill\emph{implementation I}}
\label{alg:firstImplementation_calcP}
\KwIn{layer $\layerVar$ and phase $\phaseVar$}
\BlankLine
\lIf{$\layerVar = 0$}{\Return\tcp*[h]{Stopping condition}\;}
set $\psi \gets \floor{\phaseVar/2}$\;
\tcp{Recurse first, if needed}
\lIf{$\phaseVar \mod 2 = 0$}
{
  $\recursivelyCalcP(\layerVar-1, \psi)$\;
}

\For(\tcp*[h]{calculation}){$\branchVar=0,1,\ldots,2^{m-\layerVar}-1$}
{
  \eIf(\tcp*[f]{apply Equation (\ref{eq:ArikanFirstRecursive})}){$\phaseVar \mod 2 = 0$}
  {
    \For{$u' \in \{0,1\}$}
    {
      \mbox{$\arrayPhaseAndBranch{P}{\layerVar}{\phaseVar,\branchVar}[u']\gets$} \mbox{$\sum_{u''} \frac{1}{2}
      \arrayPhaseAndBranch{P}{\layerVar-1}{\psi,2\branchVar}[u' \xor u''] 
      \cdot {}$}\;
      \alghfill $\arrayPhaseAndBranch{P}{\layerVar-1}{\psi,2\branchVar+1}[u'']$\;
    }
  }
  (\tcp*[f]{apply Equation (\ref{eq:ArikanSecondRecursive})})
  {
    set $u' \gets \arrayPhaseAndBranch{B}{\layerVar}{\phaseVar-1,\branchVar}$\;
    \For{$u'' \in \{0,1\}$}
    {
      \mbox{$\arrayPhaseAndBranch{P}{\layerVar}{\phaseVar,\branchVar}[u''] \gets$} \mbox{$\frac{1}{2}
      \arrayPhaseAndBranch{P}{\layerVar-1}{\psi,2\branchVar}[u' \xor u''] 
      \cdot {}$}\;
       \alghfill $\arrayPhaseAndBranch{P}{\layerVar-1}{\psi,2\branchVar+1}[u'']$\;
    }
  }
}
\end{algorithm}

\begin{algorithm}
\caption{recursivelyUpdateB$(\layerVar,\phaseVar)$\hfill\emph{implementation I}}
\label{alg:firstImplementation_updateB}
\Require{$\phaseVar$ is odd}
\BlankLine
set $\psi \gets \floor{\phaseVar/2}$\;
\For{$\branchVar=0,1,\ldots,2^{m-\layerVar}-1$}
{
$\arrayPhaseAndBranch{B}{\layerVar-1}{\psi,2\branchVar} \gets \arrayPhaseAndBranch{B}{\layerVar}{\phaseVar-1,\branchVar} \xor \arrayPhaseAndBranch{B}{\layerVar}{\phaseVar,\branchVar}$\;
$\arrayPhaseAndBranch{B}{\layerVar-1}{\psi,2\branchVar+1} \gets \arrayPhaseAndBranch{B}{\layerVar}{\phaseVar,\branchVar}$ \;
}
\If{$\psi \mod 2 = 1$}
{
$\recursivelyUpdateB(\layerVar-1, \psi)$\;
}
\end{algorithm}
\begin{lemm}
Algorithms~\ref{alg:firstImplementation_main}--\ref{alg:firstImplementation_updateB} are a valid implementation of the SC decoder.
\end{lemm}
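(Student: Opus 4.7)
The goal is to show that Algorithms~\ref{alg:firstImplementation_main}--\ref{alg:firstImplementation_updateB} realize the high-level SC algorithm of Algorithm~\ref{alg:highLevel}. My plan is to prove, by induction over the course of execution, two simultaneous invariants about the state of the arrays $P_\layerVar$ and $B_\layerVar$: an ``output'' invariant asserting that whenever a cell $\arrayPhaseAndBranch{P}{\layerVar}{\phaseVar,\branchVar}[b]$ is read, it already holds the value prescribed by (\ref{eq:probArray}) for the output currently associated with branch $\branchVar$ of $W_\layerVar^{(\phaseVar)}$; and an ``input'' invariant asserting that whenever a cell $\arrayPhaseAndBranch{B}{\layerVar}{\phaseVar,\branchVar}$ is read, it already holds the value prescribed by (\ref{eq:bitArray}). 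The preliminary observation that $\phaseAndBranch{\layerVar}{\phaseVar,\branchVar}=\phaseVar+2^\layerVar\branchVar$ is a bijection $[0,\layerCapitalVar)\times[0,2^{m-\layerVar})\to[0,2^m)$ ensures that the two invariants refer to disjoint memory cells, so no write collisions occur.

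For the output invariant I would argue by induction on $\layerVar$. The base case $\layerVar=0$ is immediate from the initialization loop, which sets $\arrayPhaseAndBranch{P}{0}{0,\branchVar}[b]=W(\sfy_\branchVar|b)=W_0^{(0)}(\sfy_\branchVar|b)$ and uses the previous lemma identifying $\sfy_\branchVar$ as the output at branch $\branchVar$ of layer $0$. For the inductive step I would show that when \recursivelyCalcP$(\layerVar,\phaseVar)$ is invoked from the main loop at $\layerVar=m$, the recursive calls reach every lower-layer cell that \emph{needs} (re)computation precisely once. The key case-analysis is on the parity of $\phaseVar$: when $\phaseVar=2\psi$, the algorithm first recurses to refresh $P_{\layerVar-1}$ at phase $\psi$, and the combining formula in the even branch of Algorithm~\ref{alg:firstImplementation_calcP} is literally (\ref{eq:ArikanFirstRecursive}); when $\phaseVar=2\psi+1$, the algorithm does \emph{not} recurse, since the $P_{\layerVar-1}$ cells at phase $\psi$ have been untouched since the previous even phase, and the combining formula is literally (\ref{eq:ArikanSecondRecursive}) with $u'=\arrayPhaseAndBranch{B}{\layerVar}{\phaseVar-1,\branchVar}$.

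The input invariant is handled by a parallel induction on $\layerVar$, this time descending. At $\layerVar=m$ the relevant $B$-cells are written directly by the main loop when $\sfu_\phaseVar$ is either frozen or decided from the just-computed probability pair. The claim that \recursivelyUpdateB\ correctly propagates these values downward follows from the definition of the input of a branch: the pair $(\arrayPhaseAndBranch{B}{\layerVar}{2\psi,\branchVar},\arrayPhaseAndBranch{B}{\layerVar}{2\psi+1,\branchVar})$ maps, by $(u_{2\psi},u_{2\psi+1})\mapsto(u_{2\psi}\xor u_{2\psi+1},u_{2\psi+1})$, exactly to the branch-$2\branchVar$ and branch-$(2\branchVar+1)$ inputs at layer $\layerVar-1$, which is precisely what the loop body writes. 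The conditional recursive call when $\psi$ is odd guarantees that the downward propagation continues all the way to layer $0$ only at the moments where the layer-$0$ values are finally needed for the return statement; by the earlier lemma, those layer-$0$ values are exactly the codeword bits $\hat{\sfc}_\branchVar$.

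The main obstacle is the bookkeeping of ``when is each cell fresh''. A cell $\arrayPhaseAndBranch{P}{\layerVar}{\phaseVar,\branchVar}$ is written during processing of some phase $\phaseVar^\ast$ at layer $m$ and then read many phases later without being overwritten; one must verify that between the write and the read nothing in the recursion tree could have corrupted it, and that the conditioning outputs $(\bfgenericReceived_0^{\layerCapitalVar-1},\hat\bfu_0^{\phaseVar-1})$ that equation (\ref{eq:probArray}) refers to are still the correct ones at read time. I would address this by expressing the sequence of recursive calls as a depth-first traversal of the standard polar butterfly and checking that the traversal pattern of Algorithm~\ref{alg:firstImplementation_calcP} reaches each $(\layerVar,\phaseVar,\branchVar)$ node in exactly the order needed by (\ref{eq:ArikanFirstRecursive})--(\ref{eq:ArikanSecondRecursive}); once this scheduling claim is in place, the two invariants combine to give the stated equivalence with Algorithm~\ref{alg:highLevel}.
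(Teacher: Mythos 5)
Your proposal is correct and rests on the same core idea as the paper: show that the array contents satisfy (\ref{eq:probArray}) and (\ref{eq:bitArray}) at the moments they are read, by induction over the execution. The difference is purely organizational, but it affects how cleanly the argument closes. You run nominally separate inductions on $\layerVar$ --- ascending for the $P$-invariant, descending for the $B$-invariant --- and then flag the freshness/scheduling bookkeeping as the ``main obstacle,'' to be handled by a depth-first-traversal argument. Note, however, that the two invariants cannot really be decoupled by layer: the odd-phase update of $P_\layerVar$ reads $\arrayPhaseAndBranch{B}{\layerVar}{\phaseVar-1,\branchVar}$, the layer-$m$ cells of $B$ are written from the just-computed $P_m$, and $\recursivelyUpdateB$ pushes $B$ downward, so the induction must ultimately be threaded through time. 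The paper does exactly this with a single invariant indexed by the main-loop iteration $\phaseVar$: at the end of iteration $\phaseVar$, (\ref{eq:probArray}) holds at layer $\layerVar$ for all phases $\phaseVar' \leq \floorr{\phaseVar/2^{m-\layerVar}}$, and (\ref{eq:bitArray}) holds for all $\phaseVar' < \floorr{(\phaseVar+1)/2^{m-\layerVar}}$, proved by induction on $\phaseVar$. That quantitative formulation is precisely your ``scheduling claim'' --- it says that $P_\layerVar$ and $B_\layerVar$ advance by one phase every $2^{m-\layerVar}$ iterations --- so building it into the invariant makes the freshness obstacle disappear rather than remain as a separate verification. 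Your proof would be complete once you replace the two per-layer inductions by this single time-indexed one (or, equivalently, prove your scheduling claim first and cite it in both).
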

\begin{proof}
We first note that in addition to proving the claim explicitly stated in the lemma, we must also prove an implicit claim. Namely, we must prove that the actions taken by the algorithm are well defined. Specifically, we must prove that when an array element is read from,  it was already written to (it is initialized).

Both the implicit and explicit claims are easily derived from the following observation. For a given $0 \leq \phaseVar < n$, consider iteration $\phaseVar$ of the main loop in Algorithm~\ref{alg:firstImplementation_main}. Fix a layer $0 \leq \layerVar \leq m$, and a branch $0 \leq \branchVar < 2^{m-\layerVar}$. If we suspend the run of the algorithm just after the iteration ends, then (\ref{eq:probArray}) holds with $\phaseVar'$ instead of $\phaseVar$, for all 
\[
0 \leq \phaseVar' \leq \floorr{\frac{\phaseVar}{2^{m-\layerVar}}} \; .  
\]
Similarly, (\ref{eq:bitArray}) holds with $\phaseVar'$ instead of $\phaseVar$, for all 
\[
0 \leq \phaseVar' < \floorr{\frac{\phaseVar+1}{2^{m-\layerVar}}} \; .
\]

The above observation is proved by induction on $\phaseVar$.
\end{proof}

\section{Space-Efficient Successive Cancellation Decoding}
\label{sec:SCnspace}
The running time of the SC decoder is $O(n \log n)$, and our implementation is no exception. As we have previously noted, the space complexity of our algorithm is $O( n \log n)$ as well. However, we will now show how to bring the space complexity down to $O(n)$. The observation that one can reduce the space complexity to $O(n)$ was noted, in the context of VLSI design, in \cite{LTVG:10a}.

As a first step towards this end, consider the probability pair array $P_m$. By examining the main loop in Algorithm~\ref{alg:firstImplementation_main}, we quickly see that if we are currently at phase $\phaseVar$, then we will never again make use of $\arrayPhaseAndBranch{P}{m}{\phaseVar',0}$ for all $\phaseVar' < \phaseVar$. On the other hand, we see that $\arrayPhaseAndBranch{P}{m}{\phaseVar'',0}$ is uninitialized for all $\phaseVar'' > \phaseVar$. Thus, instead of reading and writing to $\arrayPhaseAndBranch{P}{m}{\phaseVar,0}$, we can essentially disregard the phase information, and use only the first element $P_m[0]$ of the array, discarding all the rest. By the recursive nature of polar codes, this observation --- disregarding the phase information --- can be exploited for a general layer $\layerVar$ as well. Specifically, for all $0 \leq \layerVar \leq m$, let us now define the number of elements in $P_\layerVar$ to be $2^{m-\layerVar}$. Accordingly,
\begin{equation}
\label{eq:PReplace}
\mbox{$\arrayPhaseAndBranch{P}{\layerVar}{\phaseVar,\branchVar}$ is replaced by $P_{\layerVar}[\branchVar]$} \; .
\end{equation}

Note that the total space needed to hold the $P$ arrays has gone down from $O(n \log n)$ to $O(n)$. We would now like to do the same for the $B$ arrays. However, as things are currently stated, we can not disregard the phase, as can be seen for example in line 3 of Algorithm~\ref{alg:firstImplementation_updateB}. The solution is a simple renaming. As a first step, let us define for each $0 \leq \layerVar \leq m$ an array $C_\layerVar$ consisting of bit pairs and having length $n/2$. Next, let a generic reference of the form $\arrayPhaseAndBranch{B}{\layerVar}{\phaseVar,\branchVar}$ be replaced by $C_\layerVar[\psi+\branchVar \cdot 2^{\layerVar-1}][\phaseVar \mod 2]$, where $\psi = \floor{\phaseVar/2}$.
Note that we have done nothing more than rename the elements of $B_\layerVar$ as elements of $C_\layerVar$. However, we now see that as before we can disregard the value of $\psi$ and take note only of the parity of $\phaseVar$. So, let us make one more substitution: replace every instance of $C_\layerVar[\psi+\branchVar \cdot 2^{\layerVar-1}][\phaseVar \mod 2]$ by $C_\layerVar[\branchVar][\phaseVar \mod 2]$, and resize each array $C_\layerVar$ to have $2^{m-\layerVar}$ bit pairs. To sum up,
\begin{equation}
\label{eqn:BCReplace}
\mbox{$\arrayPhaseAndBranch{B}{\layerVar}{\phaseVar,\branchVar}$ is replaced by $C_\layerVar[\branchVar][\phaseVar \mod 2]$} \; .
\end{equation}

The alert reader will notice that a further reduction in space is possible: for $\layerVar=0$ we will always have that $\phaseVar=0$, and thus the parity of $\phaseVar$ is always even. However, this reduction does not affect the asymptotic space complexity which is now indeed down to $O(n)$. The revised algorithm is given as Algorithms~\ref{alg:improvedSC_main}--\ref{alg:improvedSC_updateC}.

\label{app:improvedSC}
\begin{algorithm}
\KwIn{the received vector $\sfby$}
\KwOut{a decoded codeword $\hat{\sfbc}$}
\BlankLine
\caption{Space efficient SC decoder, main loop}
\label{alg:improvedSC_main}
\For(\tcp*[h]{Initialization}){$\branchVar = 0,1,\ldots, n-1$}{
  set $\arrayAndBranch{P}{0}{\branchVar}[0] \gets W(\sfy_\branchVar|0)$, $\arrayAndBranch{P}{0}{\branchVar}[1] \gets W(\sfy_\branchVar|1)$
  }
\For(\tcp*[h]{Main loop}){$\phaseVar = 0,1,\ldots, n-1$}
{
  $\recursivelyCalcP(m, \phaseVar)$\;
  \eIf{$\sfu_\phaseVar$ is frozen}
  {
    set $\arrayAndBranch{C}{m}{0}[\phaseVar \mod 2]$ to the frozen value of $\sfu_\phaseVar$ \; \label{alg:improvedSC_main:frozenSet}
  }
  { 
    \eIf{$\arrayAndBranch{P}{m}{0}[0] > \arrayAndBranch{P}{m}{0}[1]$} 
    { \label{alg:improvedSC_main:unfrozenStart}
      set $\arrayAndBranch{C}{m}{0}[\phaseVar \mod 2] \gets 0$
    }
    { 
      set $\arrayAndBranch{C}{m}{0}[\phaseVar \mod 2] \gets 1$ \; \label{alg:improvedSC_main:unfrozenEnd}
    }
  }
  \If{$\phaseVar \mod 2 = 1$}
  {
    $\recursivelyUpdateC(m, \phaseVar)$
  }
}
\Return the decoded codeword: $\hat{\sfbc}=\left(\arrayAndBranch{C}{0}{\branchVar}[0]\right)_{\branchVar=0}^{n-1}$
\end{algorithm}

\begin{algorithm}
\caption{recursivelyCalcP$(\layerVar, \phaseVar)$\hfill \emph{space-efficient}}
\label{alg:improvedSC_calcP}
\KwIn{layer $\layerVar$ and phase $\phaseVar$}
\BlankLine
\lIf{$\layerVar = 0$}{\Return \tcp*[h]{Stopping condition}\; }
set $\psi \gets \floor{\phaseVar/2}$\;
\tcp{Recurse first, if needed}
\lIf{$\phaseVar \mod 2 = 0$}{recursivelyCalcP$(\layerVar-1, \psi)$}\;
\tcp{Perform the calculation}
\For{$\branchVar=0,1,\ldots,2^{m-\layerVar}-1$}
{
  \eIf(\tcp*[f]{apply Equation (\ref{eq:ArikanFirstRecursive})}){$\phaseVar \mod 2 = 0$}
  {
    \For{$u' \in \{0,1\}$}
    {
      $\arrayAndBranch{P}{\layerVar}{\branchVar}[u'] \gets \sum_{u''} \frac{1}{2}
       \arrayAndBranch{P}{\layerVar-1}{2\branchVar}[u' \xor u''] 
       \cdot \arrayAndBranch{P}{\layerVar-1}{2\branchVar+1}[u'']
      $\;
    }
  }
  (\tcp*[f]{apply Equation (\ref{eq:ArikanSecondRecursive})})
  { 
    set $u' \gets \arrayAndBranch{C}{\layerVar}{\branchVar}[0]$\;
    \For{$u'' \in \{0,1\}$}
    {
      $\arrayAndBranch{P}{\layerVar}{\branchVar}[u''] \gets \frac{1}{2}
      \arrayAndBranch{P}{\layerVar-1}{2\branchVar}[u' \xor u''] 
      \cdot \arrayAndBranch{P}{\layerVar-1}{2\branchVar+1}[u'']
      $\;
    }
  }
}
\end{algorithm}

\begin{algorithm}
\caption{recursivelyUpdateC$(\layerVar,\phaseVar)$\hfill\emph{space-efficient}}
\label{alg:improvedSC_updateC}
\KwIn{layer $\layerVar$ and phase $\phaseVar$}
\Require{$\phaseVar$ is odd}
\BlankLine
set $\psi \gets \floor{\phaseVar/2}$\;
\For{$\branchVar=0,1,\ldots,2^{m-\layerVar}-1$}
{
$\arrayAndBranch{C}{\layerVar-1}{2\branchVar}[\psi \mod 2] \gets \arrayAndBranch{C}{\layerVar}{\branchVar}[0] \xor \arrayAndBranch{C}{\layerVar}{\branchVar}[1]$\;
$\arrayAndBranch{C}{\layerVar-1}{2\branchVar+1}[\psi \mod 2] \gets \arrayAndBranch{C}{\layerVar}{\branchVar}[1]$ \;
}
\If{$\psi \mod 2 = 1$}
{
  recursivelyUpdateC$(\layerVar-1, \psi)$\;
}
\end{algorithm}

We end this subsection by mentioning that although we were concerned here with reducing the \emph{space} complexity of our SC decoder, the observations made with this goal in mind will be of great use in analyzing the \emph{time} complexity of our list decoder.  

\section{Successive Cancellation List Decoder}
\label{sec:SCList}
In this section we introduce and define our algorithm, the successive cancellation list (SCL) decoder. Our list decoder has a parameter $L$, called the \emph{list size}. Generally speaking, larger values of $L$ mean lower error rates but longer running times. We note at this point that successive cancellation list decoding is not a new idea: it was applied in \cite{DumerShabunov:06p} to Reed-Muller codes\footnote{In a somewhat different version of successive cancellation than that of \Arikan's, at least in exposition.}.

Recall the main loop of an SC decoder, where at each phase we must decide on the value of $\hat{\sfu}_\phaseVar$. In an SCL decoder, instead of deciding to set the value of an unfrozen $\hat{\sfu}_\phaseVar$ to either a $0$ or a $1$, we inspect both options. Namely, when decoding a non-frozen bit, we split the decoding path into two paths (see Figure~\ref{fig:listDecodingTree}). Since each split doubles the number of paths to be examined, we must prune them, and the maximum number of paths allowed is the specified list size, $L$. Naturally, we would like to keep the ``best'' paths at each stage, and thus require a pruning criterion. Our pruning criterion will be to keep the most likely paths.

\begin{figure}
\begin{center}
\includegraphics[scale=0.51]{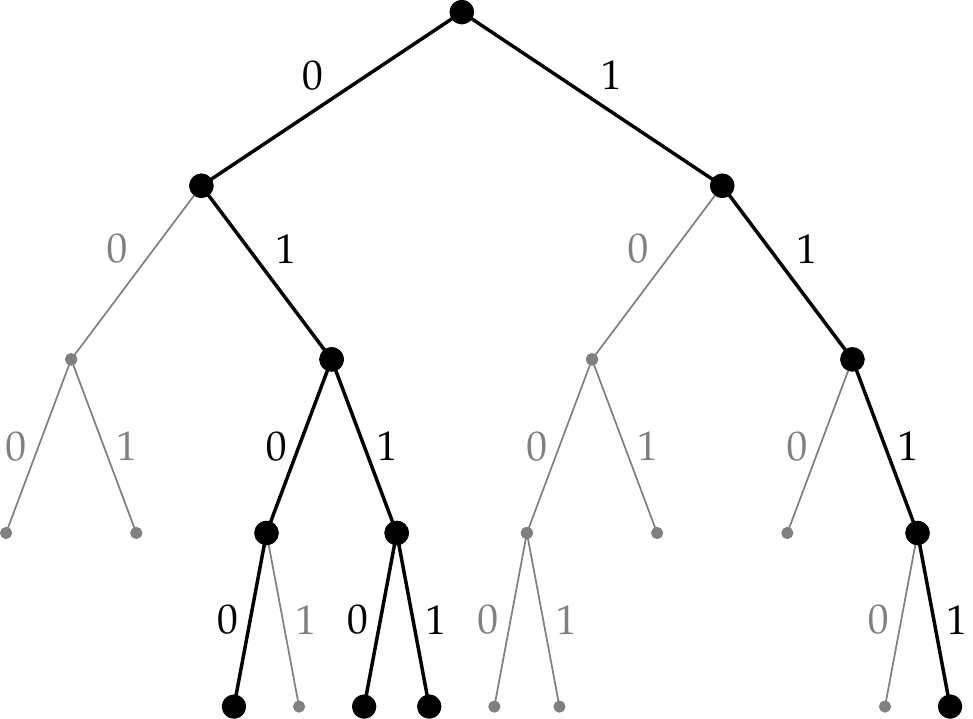}
\caption{Decoding paths of unfrozen bits for $L=4$: each level has at most $4$ nodes with paths that continue downward. Discontinued paths are colored gray.}
\label{fig:listDecodingTree}
\end{center}
\end{figure}

Consider the following outline for a naive implementation of an SCL decoder. Each time a decoding path is split into two forks, the data structures used by the ``parent'' path are duplicated, with one copy given to the first fork and the other to the second. Since the number of splits is $\Omega(L \cdot n)$, and since the size of the data structures used by each path is $\Omega(n)$, the copying operation alone would take time $\Omega(L \cdot n^2)$. This running time is clearly impractical for all but the shortest of codes. However, all known (to us) implementations of successive cancellation list decoding have complexity at least $\Omega(L \cdot n^2)$. Our main contribution in this section is the following: we show how to implement SCL decoding with time complexity $O(L \cdot n \log n)$ instead of $\Omega(L \cdot n^2)$. 

The key observation is as follows. Consider the $P$ arrays of the last section, and recall that the size of $P_\layerVar$ is proportional to $2^{m-\layerVar}$. Thus, the cost of copying $P_\layerVar$ grows exponentially small with $\layerVar$. On the other hand, looking at the main loop of Algorithm~\ref{alg:improvedSC_main} and unwinding the recursion, we see that $P_\layerVar$ is accessed only every $2^{m-\layerVar}$ incrementations of $\phaseVar$. Put another way, the bigger $P_\layerVar$ is, the less frequently it is accessed. The same observation applies to the $C$ arrays. This observation suggest the use of a ``lazy-copy''. Namely, at each given stage, the same array may be \emph{flagged as belonging to more than one decoding path}. However, when a given decoding path needs access to an array it is sharing with another path, a copy is made.

\subsection{Low-level functions}   
We now discuss the low-level functions and data structures by which the ``lazy-copy'' methodology is realized. We note in advance that since our aim was to keep the exposition as simple as possible, we have avoided some obvious optimizations. The following data structures are defined and initialized in Algorithm~\ref{alg:initializeDataStructures}.

\begin{algorithm}
\caption{initializeDataStructures$()$}
\label{alg:initializeDataStructures}
\inactivePathIndices $\gets$ \New  stack with capacity $L$\;
\activePath $\gets$ \New boolean array of size $L$\; \label{alg:initializeDataStructures:activePathDef}
\arrayPointerP $\gets$ \New 2-D array of size $(m+1) \times L$, the elements of which are array pointers\;
\arrayPointerC $\gets$ \New 2-D array of size $(m+1) \times L$, the elements of which are array pointers\;
\pathIndexToArrayIndex $\gets$ \New 2-D array of size $(m+1) \times L$\;
\inactiveArrayIndices $\gets$ \New  array of size $m+1$, the elements of which are stacks with capacity $L$\;
\arrayReferenceCount $\gets$ \New 2-D array of size $(m+1) \times L$\; \label{alg:initializeDataStructures:arrayReferenceCountDef}
\tcp{Initialization of data structures}
\For{$\layerVar=0,1,\ldots,m$}
{
  \For{$s=0,1,\ldots,L-1$}
  {
    $\arrayPointerP[\layerVar][s]$ $\gets$ \New array of float pairs of size $2^{m-\layerVar}$\;
    $\arrayPointerC[\layerVar][s]$ $\gets$ \New array of bit pairs of size $2^{m-\layerVar}$\;
    $\arrayReferenceCount[\layerVar][s]$ $\gets$ $0$\;
    $\Push(\inactiveArrayIndices[\layerVar], s)$\;
  }
}

\For{$\ell=0,1,\ldots,L-1$}
{
  $\activePath[\ell]$ $\gets$ \False\;
  $\Push(\inactivePathIndices, \ell)$\;
}
\end{algorithm}

Each path will have an index $\ell$, where $0 \leq \ell < L$. At first, only one path will be active. As the algorithm runs its course, paths will change states between ``active'' and ``inactive''. The $\inactivePathIndices$ stack \cite[Section 10.1]{CLRS:01b} will hold the indices of the inactive paths. We assume the ``array'' implementation of a stack, in which both ``push'' and ``pop'' operations take $O(1)$ time and a stack of capacity $L$ takes $O(L)$ space. The $\activePath$ array is a boolean array such that $\activePath[\ell]$ is true iff path $\ell$ is active. Note that, essentially, both $\inactivePathIndices$ and $\activePath$ store the same information. The utility of this redundancy will be made clear shortly.

For every layer $\layerVar$, we will have a ``bank'' of $L$ probability-pair arrays for use by the active paths. At any given moment, some of these arrays might be used by several paths, while others might not be used by any path. Each such array is pointed to by an element of $\arrayPointerP$. Likewise, we will have a bank of bit-pair arrays, pointed to by elements of $\arrayPointerC$.
 
The $\pathIndexToArrayIndex$ array is used as follows. For a given layer $\layerVar$ and path index $\ell$, the probability-pair array and bit-pair array corresponding to layer $\layerVar$ of path $\ell$ are pointed to by 
\[
\arrayPointerP[\layerVar][\pathIndexToArrayIndex[\layerVar][\ell]] 
\]
and
\[
\arrayPointerC[\layerVar][\pathIndexToArrayIndex[\layerVar][\ell]] \; ,
\]
respectively.

Recall that at any given moment, some probability-pair and bit-pair arrays from our bank might be used by multiple paths, while others may not be used by any. The value of $\arrayReferenceCount[\layerVar][s]$ denotes the number of paths currently using the array pointed to by $\arrayPointerP[\layerVar][s]$. Note that this is also the number of paths making use of $\arrayPointerC[\layerVar][s]$. The index $s$ is contained in the stack $\inactiveArrayIndices[\layerVar]$ iff $\arrayReferenceCount[\layerVar][s]$ is zero.

Now that we have discussed how the data structures are initialized, we continue and discuss the low-level functions by which paths are made active and inactive. We start by mentioning Algorithm~\ref{alg:assignInitialPath}, by which the initial path of the algorithm is assigned and allocated. In words, we choose a path index $\ell$ that is not currently in use (none of them are), and mark it as used. Then, for each layer $\layerVar$, we mark (through $\pathIndexToArrayIndex$) an index $s$ such that both $\arrayPointerP[\layerVar][s]$ and $\arrayPointerC[\layerVar][s]$ are allocated to the current path.  

\begin{algorithm}
\caption{assignInitialPath$()$}
\label{alg:assignInitialPath}
\KwOut{index $\ell$ of initial path}
\BlankLine
$\ell$ $\gets$ $\Pop(\inactivePathIndices)$\;
$\activePath[\ell]$ $\gets$ \True\;
\tcp{Associate arrays with path index}
\For{$\layerVar=0,1,\ldots,m$}
{
  $s$ $\gets$ $\Pop(\inactiveArrayIndices[\layerVar])$\;
  $\pathIndexToArrayIndex[\layerVar][\ell]$ $\gets$ $s$\;
  $\arrayReferenceCount[\layerVar][s]$ $\gets$ $1$ \;
}
\Return $\ell$ \;
\end{algorithm}

Algorithm~\ref{alg:clonePath} is used to clone a path --- the final step before splitting that path in two. The logic is very similar to that of Algorithm~\ref{alg:assignInitialPath}, but now we make the two paths share bit-arrays and probability arrays. 
\begin{algorithm}
\caption{clonePath$(\ell)$}
\label{alg:clonePath}
\KwIn{index $\ell$ of path to clone}
\KwOut{index $\ell'$ of copy}
\BlankLine
$\ell'$ $\gets$ $\Pop(\inactivePathIndices)$\;
$\activePath[\ell']$ $\gets$ \True\;
\tcp{Make $\ell'$ reference same arrays as $\ell$}
\For{$\layerVar=0,1,\ldots,m$}
{
  $s$ $\gets$ $\pathIndexToArrayIndex[\layerVar][\ell]$\;
  $\pathIndexToArrayIndex[\layerVar][\ell']$ $\gets$ $s$\;
  $\arrayReferenceCount[\layerVar][s]{+}{+}$\;
}
\Return $\ell'$ \;
\end{algorithm}

Algorithm~\ref{alg:killPath} is used to terminate a path, which is achieved by marking it as inactive. After this is done, the arrays marked as associated with the path must be dealt with as follows. Since the path is inactive, we think of it as not having any associated arrays, and thus all the arrays that were previously associated with the path must have their reference count decreased by one.  

\begin{algorithm}
\caption{killPath$(\ell)$}
\label{alg:killPath}
\KwIn{index $\ell$ of path to kill}
\tcp{Mark the path index $\ell$ as inactive}
$\activePath[\ell]$ $\gets$ \False\;
$\Push(\inactivePathIndices,\ell)$\;
\tcp{Disassociate arrays with path index}
\For{$\layerVar=0,1,\ldots,m$}
{
  $s$ $\gets$ $\pathIndexToArrayIndex[\layerVar][\ell]$\;
  $\arrayReferenceCount[\layerVar][s]{-}{-}$\;
  \If{$\arrayReferenceCount[\layerVar][s]=0$}
  {
    $\Push(\inactiveArrayIndices[\layerVar], s)$\;
  }
}
\end{algorithm}

The goal of all previously discussed low-level functions was essentially to enable the abstraction implemented by the functions $\getArrayPointerP$ and $\getArrayPointerC$. The function $\getArrayPointerP$ is called each time a higher-level function needs to access (either for reading or writing) the probability-pair array associated with a certain path $\ell$ and layer $\layerVar$. The implementation of $\getArrayPointerP$ is give in Algorithm~\ref{alg:getArrayPointerP}. There are two cases to consider: either the array is associated with more than one path or it is not. If it is not, then nothing needs to be done, and we return a pointer to the array. On the other hand, if the array is shared, we make a private copy for path $\ell$, and return a pointer to that copy. By doing so, we ensure that two paths will never write to the same array. The function $\getArrayPointerC$ is used in the same manner for bit-pair arrays, and has exactly the same implementation, up to the obvious changes. 

At this point, we remind the reader that we are deliberately sacrificing speed for simplicity. Namely, each such function is called either before reading or writing to an array, but the copy operation is really needed only before writing. 

\begin{algorithm}
\caption{getArrayPointer\_P$(\layerVar,\ell)$}
\label{alg:getArrayPointerP}
\KwIn{layer $\layerVar$ and path index $\ell$}
\KwOut{pointer to corresponding probability pair array}
\BlankLine
\tcp{$\getArrayPointerC(\layerVar,\ell)$ is defined identically, up to the obvious changes in lines \ref{alg:getArrayPointerP:copy} and \ref{alg:getArrayPointerP:return}}
\BlankLine
$s$ $\gets$ $\pathIndexToArrayIndex[\layerVar][\ell]$\;
\eIf{$\arrayReferenceCount[\layerVar][s]=1$}
{
  $s'$ $\gets$ $s$\;
}
{
  $s'$ $\gets$ $\Pop(\inactiveArrayIndices[\layerVar])$\;
  copy the contents of the array pointed to by $\arrayPointerP[\layerVar][s]$ into that pointed to by $\arrayPointerP[\layerVar][s']$\; \label{algline:PCchange1} \label{alg:getArrayPointerP:copy}
  $\arrayReferenceCount[\layerVar][s]{-}{-}$ \;
  $\arrayReferenceCount[\layerVar][s']$ $\gets$ $1$\;
  $\pathIndexToArrayIndex[\layerVar][\ell]$ $\gets$ $s'$\;
}

\Return $\arrayPointerP[\layerVar][s']$\; \label{alg:getArrayPointerP:return}
\end{algorithm}

We have now finished defining almost all of our low-level functions. At this point, we should specify the constraints one should follow when using them and what one can expect if these constraints are met. We start with the former.
\begin{defi}[Valid calling sequence]
\label{def:validSequence}
Consider a sequence $(f_t)_{t=0}^T$ of $T+1$ calls to the low-level functions implemented in Algorithms~\ref{alg:initializeDataStructures}--\ref{alg:getArrayPointerP}. We say that the sequence is \emph{valid} if the following traits hold.

\emph{Initialized:} The one and only index $t$ for which $f_t$ is equal to $\initializeDataStructures$ is $t=0$. The one and only index $t$ for which $f_t$ is equal to $\assignInitialPath$ is $t=1$.

\emph{Balanced:} For $1 \leq t \leq T$, denote the number of times the function $\clonePath$ was called up to and including stage $t$ as
\[
\numberCalls{t}{\clonePath} = \\ 
\mycount{\myset{1 \leq i \leq t \; : \; \mbox{$f_i$ is $\clonePath$}}} \; .
\]
Define $\numberCalls{t}{\killPath}$ similarly. Then, for every $1 \leq t \leq L$, we require that
\begin{equation}
\label{eq:balanced}
1 \leq \left(1 + \numberCalls{t}{\clonePath} - \numberCalls{t}{\killPath} \right) \leq L \; .
\end{equation}

\emph{Active:} We say that path $\ell$ is active at the end of stage $1 \leq t \leq T$ if the following two conditions hold. First, there exists an index $1 \leq i \leq t$ for which $f_i$ is either $\clonePath$ with corresponding output $\ell$ or $\assignInitialPath$ with output $\ell$. Second, there is no intermediate index $i < j \leq t$ for which $f_j$ is $\killPath$ with input $\ell$. For each $1 \leq t  < T$ we require that if $f_{t+1}$ has input $\ell$, then $\ell$ is active at the end of stage $t$. 
\end{defi}

We start by stating that the most basic thing one would expect to hold does indeed hold.
\begin{lemm}
\label{lemm:wellDefined}
Let $(f_t)_{t=0}^T$ be a valid sequence of calls to the low-level functions implemented in Algorithms~\ref{alg:initializeDataStructures}--\ref{alg:getArrayPointerP}. Then, the run is well defined: i) A ``pop'' operation is never carried out on a empty stack, ii) a ``push'' operation never results in a stack with more than $L$ elements, and iii) a ``read'' operation from any array defined in lines~\ref{alg:initializeDataStructures:activePathDef}--\ref{alg:initializeDataStructures:arrayReferenceCountDef} of Algorithm~\ref{alg:initializeDataStructures} is always preceded by a ``write'' operation to the same location in the array.
\end{lemm}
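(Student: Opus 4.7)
My plan is to prove Lemma~\ref{lemm:wellDefined} by induction on $t$, carrying a set of bookkeeping invariants that tie together the stacks, the \textsf{activePath} array, and the \textsf{arrayReferenceCount} table. The invariants I maintain are:
(J1) \textsf{inactivePathIndices} contains exactly those indices $\ell$ with \textsf{activePath}$[\ell]$ equal to \textsf{false};
(J2) for every layer $\layerVar$, \textsf{inactiveArrayIndices}$[\layerVar]$ contains exactly those indices $s$ with \textsf{arrayReferenceCount}$[\layerVar][s] = 0$;
(J3) for every active path $\ell$ and every layer $\layerVar$, the entry \textsf{pathIndexToArrayIndex}$[\layerVar][\ell]$ has been written at some earlier stage and currently points to an index $s$ with \textsf{arrayReferenceCount}$[\layerVar][s] \geq 1$;
(J4) for every $\layerVar$ and $s$, \textsf{arrayReferenceCount}$[\layerVar][s]$ equals the number of active paths $\ell$ with \textsf{pathIndexToArrayIndex}$[\layerVar][\ell] = s$.

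For the base case I check directly that after $f_0 = \,$\textsf{initializeDataStructures} and $f_1 = \,$\textsf{assignInitialPath} all four invariants hold and the three conclusions of the lemma are preserved throughout: every pop in $f_0$ and $f_1$ is from a stack into which the popped element has already been pushed, no stack exceeds its capacity of $L$, and every read from one of the arrays declared in lines~\ref{alg:initializeDataStructures:activePathDef}--\ref{alg:initializeDataStructures:arrayReferenceCountDef} of Algorithm~\ref{alg:initializeDataStructures} is preceded by a matching initializing write in $f_0$.

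For the inductive step I split by cases on $f_{t+1}$. In the \textsf{clonePath}$(\ell)$ case, balancedness (\ref{eq:balanced}) forces the pre-call number of active paths to be at most $L-1$, so by (J1) \textsf{inactivePathIndices} is non-empty and the pop succeeds; activeness of $\ell$ together with (J3) makes each read of \textsf{pathIndexToArrayIndex}$[\layerVar][\ell]$ legitimate, and the reference-count increments preserve (J4). The \textsf{killPath}$(\ell)$ case is symmetric: balancedness ensures the pre-call active count is at least $2$, so \textsf{inactivePathIndices} has at most $L-2$ elements before the push and hence at most $L-1$ after; (J3) and (J4) together guarantee that an index is pushed onto \textsf{inactiveArrayIndices}$[\layerVar]$ only when its reference count genuinely reaches zero, preserving (J2) and the bound in~(ii). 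The \textsf{getArrayPointer\_P} and \textsf{getArrayPointer\_C} cases invoke (J3) to justify the initial read of \textsf{pathIndexToArrayIndex} and (J4) to justify the subsequent reference-count bookkeeping.

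The main obstacle, and the only step that is not purely mechanical, is the branch inside \textsf{getArrayPointer\_P} where \textsf{arrayReferenceCount}$[\layerVar][s] > 1$ and we must pop from \textsf{inactiveArrayIndices}$[\layerVar]$: I need this stack to be non-empty. The argument is a short pigeonhole using (J4) and balancedness. Summing \textsf{arrayReferenceCount}$[\layerVar][s]$ over all $s$ gives, by (J4), the current number of active paths, which by (\ref{eq:balanced}) is at most $L$. Because some individual count already exceeds $1$, the number of \emph{distinct} indices $s$ with positive reference count is strictly less than this sum and hence at most $L-1$, so by (J2) at least one index still resides in \textsf{inactiveArrayIndices}$[\layerVar]$. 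Once this case is handled, the remaining updates in \textsf{getArrayPointer\_P} (and analogously in \textsf{getArrayPointer\_C}) preserve all four invariants, closing the induction and yielding conclusions (i), (ii), and (iii).
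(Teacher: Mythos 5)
Your proof is correct and follows essentially the same route as the paper: the paper's proof consists precisely of stating four invariants to be carried through an induction on $t$ (your (J1), (J2), (J4) are its statements I, III, IV, and your appeal to balancedness plays the role of its statement II, which identifies the bracketed expression in (\ref{eq:balanced}) with the number of active paths). Your write-up is in fact more complete, since the paper omits the details — in particular the pigeonhole argument showing that \texttt{inactiveArrayIndices}$[\layerVar]$ is non-empty when \texttt{getArrayPointer\_P} must pop, which you correctly identify as the one non-mechanical step.
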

\begin{proof}
The proof boils-down to proving the following four statements concurrently for the end of each step $1 \leq t \leq T$, by induction on $t$.
\begin{description}
\item[I] A path index $\ell$ is active by Definition~\ref{def:validSequence} iff $\activePath[\ell]$ is true iff $\inactivePathIndices$ does not contain the index $\ell$.
\item[II] The bracketed expression in (\ref{eq:balanced}) is the number of active paths at the end of stage $t$.
\item[III] The value of   $\arrayReferenceCount[\layerVar][s]$ is positive iff the stack $\inactiveArrayIndices[\layerVar]$ does not contain the index $s$, and is zero otherwise.
\item[IV] The value of $\arrayReferenceCount[\layerVar][s]$ is equal to the number of active paths $\ell$ for which $\pathIndexToArrayIndex[\layerVar][\ell]=s$. 
\end{description}
\end{proof}

We are now close to formalizing the utility of our low-level functions. But first, we must formalize the concept of a descendant path. Let $(f_t)_{t=0}^T$ be a valid sequence of calls. Next, let $\ell$ be an active path index at the end of stage $1 \leq t < T$. Henceforth, let us abbreviate the ``phrase path index $\ell$ at the end of stage $t$'' by ``$[\ell,t]$''. We say that $[\ell',t+1]$ is a child of $[\ell,t]$ if i) $\ell'$ is active at the end of stage $t+1$, and ii) either $\ell'=\ell$ or $f_{t+1}$ was the $\clonePath$ operation with input $\ell$ and output $\ell'$. Likewise, we say that $[\ell',t']$ is a descendant of $[\ell,t]$  if $1 \leq t \leq t'$ and there is a (possibly empty) hereditary chain.

We now broaden our definition of a valid function calling sequence by allowing reads and writes to arrays. 

\emph{Fresh pointer:} consider the case where $t > 1$ and $f_t$ is either the $\getArrayPointerP$ or $\getArrayPointerC$ function with input $(\layerVar,\ell)$ and output $p$. Then, for valid indices $i$, we allow read and write operations to $p[i]$ after stage $t$ but only before any stage $t' > t$ for which $f_{t'}$ is either $\clonePath$ or $\killPath$.
 
Informally, the following lemma states that each path effectively sees a private set of arrays.
\begin{lemm}
\label{lemm:freshPointer}
Let $(f_t)_{t=0}^T$ be a valid sequence of calls to the low-level functions implemented in Algorithms~\ref{alg:initializeDataStructures}--\ref{alg:getArrayPointerP}. Assume the read/write operations between stages satisfy the ``fresh pointer'' condition.

Let the function $f_t$ be $\getArrayPointerP$ with input $(\layerVar,\ell)$ and output $p$. Similarly, for stage $t' \geq t$, let $f_{t'}$ be $\getArrayPointerP$ with input $(\layerVar,\ell')$ and output $p'$. Assume that $[\ell',t']$ is a descendant of $[\ell,t]$.

Consider a ``fresh pointer'' write operation to $p[i]$. Similarly, consider a ``fresh pointer'' read operation from $p'[i]$ carried out after the ``write'' operation. Then, assuming no intermediate ``write'' operations of the above nature, the value written is the value read.

A similar claim holds for $\getArrayPointerC$.
\end{lemm}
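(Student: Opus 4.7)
My plan is to induct on $t'-t$, carrying along an invariant that pins down the physical array currently seen by the path on the descendant chain from $[\ell,t]$ to $[\ell',t']$. The base case $t' = t$ forces $\ell' = \ell$ and $p' = p$: the fresh-pointer write and read hit the same memory location in sequence, so the claim is immediate.

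For the inductive step, let $A$ denote the physical array to which $p$ points immediately after $f_t$ returns. By inspection of Algorithm~\ref{alg:getArrayPointerP}, combined with invariant~IV of Lemma~\ref{lemm:wellDefined}, the returned pointer always satisfies $\arrayReferenceCount[\layerVar][s'] = 1$ with $\ell$ being the \emph{unique} path associated with $A$ at that moment, so the fresh-pointer write privately sets $A[i] = v$. Consider the unique chain $\ell = \ell_t, \ell_{t+1}, \ldots, \ell_{t'} = \ell'$ of ancestors of $[\ell',t']$ in the descendant tree rooted at $[\ell,t]$, and let $D(s)$ denote the physical layer-$\layerVar$ array that $\ell_s$ is currently associated with via $\pathIndexToArrayIndex$. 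The invariant I would propagate is $D(s)[i] = v$ for every $t \leq s < t'$.

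Preservation of this invariant reduces to a case analysis on $f_{s+1}$. Any $\clonePath$ or $\killPath$ call only manipulates reference counts, never array contents; in particular a $\clonePath(\ell_s)$ call may extend the chain to a child that inherits the same $\pathIndexToArrayIndex$ entry, so $D(s+1) = D(s)$. A $\getArrayPointerP(\layerVar, \ell_s)$ call returns either $D(s)$ itself or a verbatim copy of it (line~\ref{alg:getArrayPointerP:copy} of Algorithm~\ref{alg:getArrayPointerP}), in both cases preserving position $i$. A $\getArrayPointerP(\layerVar, \tilde\ell)$ call for $\tilde\ell \neq \ell_s$ either acts on a physically different array (leaving $D(s)$ untouched) or, if $\tilde\ell$ was sharing $D(s)$, allocates a fresh slot and copies $D(s)$ into it, leaving $D(s)$ still attached to $\ell_s$ with strictly smaller reference count and the same contents. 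Writes by other paths through other fresh pointers at layer $\layerVar$ land in their own exclusively-owned arrays (again by Lemma~\ref{lemm:wellDefined}), which are physically disjoint from $D(s)$; writes on our own chain to position $i$ are excluded by the ``no intermediate writes'' hypothesis.

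At stage $t'$ the call $\getArrayPointerP(\layerVar, \ell')$ returns either $D(t'-1)$ directly or a fresh copy thereof, so $p'[i] = v$ and the subsequent read yields $v$ as required. The argument for $\getArrayPointerC$ is word-for-word identical with bit pairs replacing probability pairs. The main obstacle I anticipate is the bookkeeping in the off-chain $\getArrayPointerP$ sub-case: showing that a call by an unrelated $\tilde\ell$ never silently overwrites $D(s)[i]$ in place really does boil down to the ``exclusive ownership'' property of fresh pointers, and invariant~IV of Lemma~\ref{lemm:wellDefined} is what makes that property precise enough to apply cleanly at every step of the case split.
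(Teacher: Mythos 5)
Your proof is correct and follows essentially the same route as the paper, whose entire argument for this lemma is the one-line remark that a simple induction using the invariants established in the proof of Lemma~\ref{lemm:wellDefined} suffices. You have simply written out that induction explicitly, with invariant~IV (exclusive ownership of an array whose reference count is one) doing exactly the work the paper intends it to do.
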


\begin{proof}
With the observations made in the proof of Lemma~\ref{lemm:wellDefined} at hand, a simple induction on $t$ is all that is needed.
\end{proof}

We end this section by noting that the function $\pathIndexInactive$ given in Algorithm~\ref{alg:pathIndexInactive} is simply a shorthand, meant to help readability later on.

\begin{algorithm}
\caption{pathIndexInactive$(\ell)$}
\label{alg:pathIndexInactive}
\KwIn{path index $\ell$}
\KwOut{$\True$ if path $\ell$ is active, and $\False$ otherwise}
\BlankLine
\eIf{$\activePath[\ell]$ = \True}
{
  \Return \False
}
{
  \Return \True
}
\end{algorithm}

\subsection{Mid-level functions}
In this section we introduce Algorithms~\ref{alg:listSC_calcP} and \ref{alg:listSC_updateC}, our new implementation of Algorithms \ref{alg:improvedSC_calcP} and \ref{alg:improvedSC_updateC}, respectively, for the list decoding setting.

\begin{algorithm}
\caption{recursivelyCalcP$(\layerVar, \phaseVar)$ \hfill\emph{list version}}
\label{alg:listSC_calcP}
\KwIn{layer $\layerVar$ and phase $\phaseVar$}
\BlankLine
\lIf{$\layerVar = 0$}{\Return \tcp*[h]{Stopping condition}}\;
set $\psi \gets \floor{\phaseVar/2}$\;
\tcp{Recurse first, if needed}
\lIf{$\phaseVar \mod 2 = 0$}{$\recursivelyCalcP(\layerVar-1, \psi)$}\; \label{alg:listSC_calcP:recursiveCall}
\tcp{Perform the calculation}
$\sigma \gets 0$ \;
\For{$\ell = 0,1,\ldots,L-1$}
{ \label{alg:listSC_calcP:startProbCalc}
  \If{ $\pathIndexInactive(\ell)$}{\Continue\;}
  $P_\layerVar$ $\gets$ $\getArrayPointerP(\layerVar,\ell)$\;
  $P_{\layerVar-1}$ $\gets$ $\getArrayPointerP(\layerVar-1,\ell)$\;
  $C_\layerVar$ $\gets$ $\getArrayPointerC(\layerVar,\ell)$\;

  \For{$\branchVar=0,1,\ldots,2^{m-\layerVar}-1$}
  {
    \eIf{$\phaseVar \mod 2 = 0$}
    {
      \tcp*[r]{apply Equation (\ref{eq:ArikanFirstRecursive})}
      \For{$u' \in \{0,1\}$}
      {
        $\arrayAndBranch{P}{\layerVar}{\branchVar}[u'] \gets \sum_{u''} \frac{1}{2}
         \arrayAndBranch{P}{\layerVar-1}{2\branchVar}[u' \xor u''] 
         \cdot \arrayAndBranch{P}{\layerVar-1}{2\branchVar+1}[u'']
        $\;
        $\sigma \gets \myrmax{\sigma,\arrayAndBranch{P}{\layerVar}{\branchVar}[u']}$ \; \label{alg:listSC_calcP:minus}
      }
    }
    (\tcp*[f]{apply Equation (\ref{eq:ArikanSecondRecursive})})
    { 
      set $u' \gets \arrayAndBranch{C}{\layerVar}{\branchVar}[0]$\;
      \For{$u'' \in \{0,1\}$}
      {
        $\arrayAndBranch{P}{\layerVar}{\branchVar}[u''] \gets \frac{1}{2}
        \arrayAndBranch{P}{\layerVar-1}{2\branchVar}[u' \xor u''] 
        \cdot \arrayAndBranch{P}{\layerVar-1}{2\branchVar+1}[u'']
        $\;
        $\sigma \gets \myrmax{\sigma, \arrayAndBranch{P}{\layerVar}{\branchVar}[u'']}$ \; \label{alg:listSC_calcP:plus} \label{alg:listSC_calcP:endProbCalc}
      }
    }
  }
}
\tcp{normalize probabilities}
\For{$\ell = 0,1,\ldots,L-1$} 
{ \label{alg:listSC_calcP:normalizationStart}
  \If{ $\pathIndexInactive(\ell)$}{\Continue\;}
  $P_\layerVar$ $\gets$ $\getArrayPointerP(\layerVar,\ell)$\;

  \For{$\branchVar=0,1,\ldots,2^{m-\layerVar}-1$}
  {
    \For{$u \in \{0,1\}$}
    {
       $\arrayAndBranch{P}{\layerVar}{\branchVar}[u] \gets  \arrayAndBranch{P}{\layerVar}{\branchVar}[u]/\sigma$ \; \label{alg:listSC_calcP:normalizationEnd}
    }
  }
}
\end{algorithm}

One first notes that our new implementations loop over all path indices $\ell$. Thus, our new implementations make use of the functions $\getArrayPointerP$ and $\getArrayPointerC$ in order to assure that the consistency of calculations is preserved, despite multiple paths sharing information. In addition, Algorithm~\ref{alg:improvedSC_calcP} contains code to normalize probabilities. The normalization is needed for a technical reason (to avoid floating-point underflow), and will be expanded on shortly.

We start out by noting that the ``fresh pointer'' condition we have imposed on ourselves indeed holds. To see this, consider first Algorithm~\ref{alg:listSC_calcP}. The key point to note is that neither the $\killPath$ nor the $\clonePath$ function is called from inside the algorithm. The same observation holds for Algorithm~\ref{alg:listSC_updateC}. Thus, the ``fresh pointer'' condition is met, and Lemma~\ref{lemm:freshPointer} holds.

\begin{algorithm}
\caption{recursivelyUpdateC$(\layerVar,\phaseVar)$\hfill\emph{list version}}
\label{alg:listSC_updateC}
\KwIn{layer $\layerVar$ and phase $\phaseVar$}
\Require{$\phaseVar$ is odd}
\BlankLine
set $C_\layerVar$ $\gets$ $\getArrayPointerC(\layerVar,\ell)$\;
set $C_{\layerVar-1}$ $\gets$ $\getArrayPointerC(\layerVar-1,\ell)$\;

set $\psi \gets \floor{\phaseVar/2}$\;
\For{$\ell = 0,1,\ldots,L-1$}
{
  \If{ $\pathIndexInactive(\ell)$}{\Continue\;}

  \For{$\branchVar=0,1,\ldots,2^{m-\layerVar}-1$}
  {
  $\arrayAndBranch{C}{\layerVar-1}{2\branchVar}[\psi \mod 2] \gets \arrayAndBranch{C}{\layerVar}{\branchVar}[0] \xor \arrayAndBranch{C}{\layerVar}{\branchVar}[1]$\;
  $\arrayAndBranch{C}{\layerVar-1}{2\branchVar+1}[\psi \mod 2] \gets \arrayAndBranch{C}{\layerVar}{\branchVar}[1]$ \;
  }
}
\If{$\psi \mod 2 = 1$}
{
  $\recursivelyUpdateC(\layerVar-1, \psi)$\;
}
\end{algorithm}

We now consider the normalization step carried out in lines~\ref{alg:listSC_calcP:normalizationStart}--\ref{alg:listSC_calcP:normalizationEnd} of Algorithm~\ref{alg:listSC_calcP}. Recall that a floating-point variable can not be used to hold arbitrarily small positive reals, and in a typical implementation, the result of a calculation that is ``too small'' will be rounded to $0$. This scenario is called an ``underflow''.

We now confess that all our previous implementations of SC decoders were prone to ``underflow''. To see this, consider line~\ref{alg:highLevel} in the outline implementation given in Algorithm~\ref{alg:highLevel:calcW}. Denote by $\sfbY$ and $\sfbU$ the random vectors corresponding to $\sfby$ and $\sfbu$, respectively. For $b \in \mysett{0,1}$ we have that
\begin{multline*}
W_m^{(\phaseVar)}(\sfby_0^{n-1}, \hat{\sfbu}_0^{\phaseVar-1}|b) = \\
2 \cdot \myprobb{\sfbY_0^{n-1} = \sfby_0^{n-1}, \sfbU_0^{\phaseVar-1} = \hat{\sfbu}_0^{\phaseVar-1}, \sfU_\phaseVar = b } \leq \\
2 \cdot \myprobb{\sfbU_0^{\phaseVar-1} = \hat{\sfbu}_0^{\phaseVar-1}, \sfU_\phaseVar = b } = 2^{-\phaseVar} \; .
\end{multline*}
Recall that $\phaseVar$ iterates from $0$ to $n-1$. Thus, for codes having length greater than some small constant, the comparison in line~\ref{alg:highLevel} of Algorithm~\ref{alg:highLevel:calcW} ultimately becomes meaningless, since both probabilities are rounded to $0$. The same holds for all of our previous implementations.

Luckily, there is a simple fix to this problem. After the probabilities are calculated in lines~\ref{alg:listSC_calcP:startProbCalc}--\ref{alg:listSC_calcP:endProbCalc} of Algorithm~\ref{alg:listSC_calcP}, we normalize\footnote{This correction does not assure us that underflows will not occur. However, now, the probability of a meaningless comparison due to underflow will be extremely low.} the highest probability to be $1$ in lines~\ref{alg:listSC_calcP:normalizationStart}--\ref{alg:listSC_calcP:normalizationEnd}. 

We claim that apart for avoiding underflows, normalization does not alter our algorithm. The following lemma formalizes this claim.

\begin{lemm}
\label{lemm:normalization}
Assume that we are working with ``perfect'' floating-point numbers. That is, our floating-point variables are infinitely accurate and do not suffer from underflow/overflow. Next, consider a variant of Algorithm~\ref{alg:listSC_calcP}, termed Algorithm~\ref{alg:listSC_calcP}', in which just before line \ref{alg:listSC_calcP:normalizationStart} is first executed, the variable $\sigma$ is set to $1$. That is, effectively, there is no normalization of probabilities in Algorithm~\ref{alg:listSC_calcP}'.

Consider two runs, one of Algorithm~\ref{alg:listSC_calcP} and one of Algorithm~\ref{alg:listSC_calcP}'. In both runs, the input parameters to both algorithms are the same. Moreover, assume that in both runs, the state of the auxiliary data structures is the same, apart for the following.

Recall that our algorithm is recursive, and let $\layerVar_0$ be the first value of the variable $\layerVar$ for which line~\ref{alg:listSC_calcP:startProbCalc} is executed. That is, $\layerVar_0$ is the layer in which (both) algorithms do not perform preliminary recursive calculations. Assume that when we are at this base stage, $\layerVar = \layerVar_0$, the following holds: the values read from $P_{\layerVar-1}$ in lines~\ref{alg:listSC_calcP:minus} and \ref{alg:listSC_calcP:plus} in the run of Algorithm~\ref{alg:listSC_calcP} are a multiple by $\alpha_{\layerVar-1}$ of the corresponding values read in the run of Algorithm~\ref{alg:listSC_calcP}'. Then, for every $\layerVar \geq \layerVar_0$, there exist a constant $\alpha_{\layerVar}$ such that the values written to $P_{\layerVar}$ in line~\ref{alg:listSC_calcP:normalizationEnd} in the run of Algorithm~\ref{alg:listSC_calcP} are a multiple by $\alpha_{\layerVar}$ of the corresponding values written by Algorithm~\ref{alg:listSC_calcP}'.
\end{lemm}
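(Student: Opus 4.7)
The plan is to prove the lemma by induction on $\layerVar$ starting at $\layerVar_0$, with the key technical ingredient being the \emph{bilinearity} of the polar recurrences (\ref{eq:ArikanFirstRecursive}) and (\ref{eq:ArikanSecondRecursive}) in the layer-$(\layerVar-1)$ probabilities. Specifically, each entry written to $\arrayAndBranch{P}{\layerVar}{\branchVar}[\cdot]$ at lines \ref{alg:listSC_calcP:minus}/\ref{alg:listSC_calcP:plus} is a bilinear expression in one entry of $\arrayAndBranch{P}{\layerVar-1}{2\branchVar}$ and one entry of $\arrayAndBranch{P}{\layerVar-1}{2\branchVar+1}$, so multiplying every layer-$(\layerVar-1)$ entry used in the computation by a common scalar $\alpha_{\layerVar-1}$ multiplies every pre-normalization entry of $P_\layerVar$ by exactly $\alpha_{\layerVar-1}^2$.

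For the base case $\layerVar = \layerVar_0$, the hypothesis of the lemma directly supplies the common scaling factor $\alpha_{\layerVar_0 - 1}$ on the entries read from $P_{\layerVar_0 - 1}$. Bilinearity then gives that the pre-normalization values written at layer $\layerVar_0$ by Algorithm~\ref{alg:listSC_calcP} are $\alpha_{\layerVar_0 - 1}^2$ times those written by Algorithm~\ref{alg:listSC_calcP}'. Since Algorithm~\ref{alg:listSC_calcP}' has $\sigma = 1$ and therefore performs no effective normalization, while Algorithm~\ref{alg:listSC_calcP} divides every entry uniformly by the positive scalar $\sigma$, we may set $\alpha_{\layerVar_0} := \alpha_{\layerVar_0 - 1}^2 / \sigma$ and the claim holds at layer $\layerVar_0$.

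For the inductive step, assume the claim at layer $\layerVar - 1$, so that after the recursive call on line \ref{alg:listSC_calcP:recursiveCall} (when present) completes, every entry of $P_{\layerVar-1}$ in the Algorithm~\ref{alg:listSC_calcP} run equals $\alpha_{\layerVar-1}$ times the corresponding entry in the Algorithm~\ref{alg:listSC_calcP}' run; the hypothesis that all non-$P$ auxiliary data structures are synchronised across the two runs, together with Lemma~\ref{lemm:freshPointer}, makes the pairing of entries across runs canonical. Bilinearity then scales all pre-normalization entries at layer $\layerVar$ by $\alpha_{\layerVar-1}^2$, and because $\sigma$ is a maximum taken over precisely the set of entries that are subsequently divided by it, $\sigma$ itself scales by $\alpha_{\layerVar-1}^2$ relative to the (unused) maximum in Algorithm~\ref{alg:listSC_calcP}'. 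Dividing through in lines \ref{alg:listSC_calcP:normalizationStart}--\ref{alg:listSC_calcP:normalizationEnd} therefore yields entries scaled by the single scalar $\alpha_\layerVar := \alpha_{\layerVar-1}^2 / \sigma$, completing the induction.

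I do not expect a genuine obstacle in the argument; the only point requiring any care is verifying that a \emph{single} $\sigma$ uniformly rescales every entry to which it is applied. This is immediate from the structure of Algorithm~\ref{alg:listSC_calcP}, because $\sigma$ is the max over exactly the same set of entries as the subsequent division, and by the inductive hypothesis these all carry the common factor $\alpha_{\layerVar-1}^2$ relative to Algorithm~\ref{alg:listSC_calcP}'. The constants $\alpha_\layerVar$ produced by the recursion $\alpha_\layerVar = \alpha_{\layerVar-1}^2/\sigma$ are strictly positive throughout, which is all that the statement of the lemma requires.
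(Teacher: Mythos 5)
Your proof is correct and follows essentially the same route as the paper: the paper's (very terse) proof likewise observes that the degree-two homogeneity of the recursions (\ref{eq:ArikanFirstRecursive})--(\ref{eq:ArikanSecondRecursive}) gives $\alpha_{\layerVar} = (\alpha_{\layerVar-1})^2/\sigma$ at the base layer and then invokes induction on $\layerVar$. You have merely filled in the details the paper leaves to the reader, including the useful observation that a single $\sigma$ uniformly rescales every entry over all paths and branches.
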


\begin{proof}
For the base case $\layerVar = \layerVar_0$ we have by inspection that the constant $\alpha_\layerVar$ is simply $(\alpha_{\layerVar-1})^2$, divided by the value of $\sigma$ after the main loop has finished executing in Algorithm~\ref{alg:listSC_calcP}. The claim for a general $\layerVar$ follows by induction.
\end{proof}

\subsection{High-level functions}
We now turn our attention to the high-level functions of our algorithm. Consider the topmost function, the main loop given in Algorithm~\ref{alg:list_main}. We start by noting that by lines \ref{alg:list_main:initializeDataStructures} and \ref{alg:list_main:assignInitialPath}, we have that condition  ``initialized'' in Definition~\ref{def:validSequence} is satisfied. Also, for the inductive basis, we have that condition ``balanced'' holds for $t=1$ at the end of line \ref{alg:list_main:assignInitialPath}. Next, notice that lines \ref{alg:list_main:P0start}--\ref{alg:list_main:P0end} are in-line with our ``fresh pointer'' condition.

The main loop, lines \ref{alg:list_main:mainLoopStart}--\ref{alg:list_main:mainLoopEnd}, is the analog of the main loop in Algorithm~\ref{alg:improvedSC_main}. After the main loop has finished, we pick (in lines \ref{alg:list_main:pickCodewordStart}--\ref{alg:list_main:pickCodewordEnd}) the most likely codeword from our list and return it.

\begin{algorithm}
\KwIn{the received vector $\sfby$ and a list size $L$ as a global}
\KwOut{a decoded codeword $\hat{\sfbc}$}
\BlankLine
\caption{SCL decoder, main loop}
\label{alg:list_main}
\tcp{Initialization}
$\initializeDataStructures()$\; \label{alg:list_main:initializeDataStructures}
$\ell \gets \assignInitialPath()$\; \label{alg:list_main:assignInitialPath}
$P_0 \gets \getArrayPointerP(0,\ell)$\; \label{alg:list_main:P0start}
\For{$\branchVar = 0,1,\ldots, n-1$}{
  set $\arrayAndBranch{P}{0}{\branchVar}[0] \gets W(\sfy_\branchVar|0)$, $\arrayAndBranch{P}{0}{\branchVar}[1] \gets W(\sfy_\branchVar|1)$ \; \label{alg:list_main:P0end}
  } 
\tcp{Main loop} 
\For{$\phaseVar = 0,1,\ldots, n-1$} 
{ \label{alg:list_main:mainLoopStart}
$\recursivelyCalcP(m, \phaseVar)$\; \label{alg:list_main:recursivelyCalcP}  

     \eIf{$\sfu_\phaseVar$ is frozen}
     {
       $\continuePathsFrozenBit(\phaseVar)$\;
     }
     { 
       $\continuePathsUnfrozenBit(\phaseVar)$\;
     }

  \If{$\phaseVar \mod 2 = 1$}
  {
    \recursivelyUpdateC$(m, \phaseVar)$\; \label{alg:list_main:mainLoopEnd} \label{alg:list_main:recursivelyUpdateC}
  }
} 
\tcp{Return the best codeword in the list} 
$\ell \gets \findMostProbablePath()$ \; \label{alg:list_main:pickCodewordStart}

set $C_0 \gets \getArrayPointerC(0,\ell)$\; 

\Return $\hat{\sfbc}=\left(\arrayAndBranch{C}{0}{\branchVar}[0]\right)_{\branchVar=0}^{n-1}$\; \label{alg:list_main:pickCodewordEnd}
\end{algorithm}

We now expand on Algorithms~\ref{alg:list_continueFrozen} and \ref{alg:list_continueUnfrozen}. Algorithm~\ref{alg:list_continueFrozen} is straightforward: it is the analog of line~\ref{alg:improvedSC_main:frozenSet} in Algorithm~\ref{alg:improvedSC_main}, applied to all active paths.

\begin{algorithm}
\caption{continuePaths\_FrozenBit$(\phaseVar)$}
\label{alg:list_continueFrozen}
\KwIn{phase $\phaseVar$}
\BlankLine

\For{$\ell = 0,1,\ldots,L-1$}
{
  \lIf{ $\pathIndexInactive(\ell)$}{\Continue\;}
  $C_m \gets \getArrayPointerC(m,\ell)$\;
  set $\arrayAndBranch{C}{m}{0}[\phaseVar \mod 2]$ to the frozen value of $\sfu_\phaseVar$\;
}
\end{algorithm}

\begin{algorithm}
\caption{continuePaths\_UnfrozenBit$(\phaseVar)$}
\label{alg:list_continueUnfrozen}
\KwIn{phase $\phaseVar$}
\BlankLine

$\probForks \gets$ \New 2-D float array of size $L \times 2$ \;
$i \gets 0$\;

\tcp{populate \probForks}
\For{$\ell = 0,1,\ldots,L-1$}
{
  \eIf{ $\pathIndexInactive(\ell)$}
  {
\probForks[$\ell$][$0$] $\gets$ $ \; -1$\;
\probForks[$\ell$][$1$] $\gets$ $ \; -1$\;
}{

   $P_m \gets \getArrayPointerP(m,\ell)$\;

\probForks[$\ell$][$0$] $\gets \arrayAndBranch{P}{m}{0}[0]$\;
\probForks[$\ell$][$1$] $\gets \arrayAndBranch{P}{m}{0}[1]$\;
$i \gets i + 1$\;
}
}
$\rho \gets \min(2i,L)$\;
$\contForks \gets$ \New 2-D boolean array of size $L \times 2$ \;
\tcp{The following is possible in $O(L)$ time}
populate $\contForks$ such that $\contForks[\ell][b]$ is $\True$ iff \probForks[$\ell$][$b$] is one of the $\rho$ largest entries in $\probForks$ (and ties are broken arbitrarily) \; \label{alg:list_continueUnfrozen:pivot}

\tcp{First, kill-off non-continuing paths}
\For{$\ell = 0,1,\ldots,L-1$}
{
  \If{ $\pathIndexInactive(\ell)$}{\Continue\;}
  \If{$\contForks[\ell][0] = \False \; \myand \; \contForks[\ell][1] = \False$}
  {
    $\killPath(\ell)$
  }
}

\tcp{Then, continue relevant paths, and duplicate if necessary}
\For{$\ell = 0,1,\ldots,L-1$}
{
  \If(\tcp*[h]{both forks are bad, or invalid}){$\contForks[\ell][0] = \False \; \myand \; \contForks[\ell][1] = \False$} 
  {
    \Continue  
  }

  $C_m \gets \getArrayPointerC(m,\ell)$\;

  \eIf(\tcp*[h]{both forks are good}){$\contForks[\ell][0] = \True \; \myand \; \contForks[\ell][1] = \True$} 
  {
     set $\arrayAndBranch{C}{m}{0}[\phaseVar \mod 2] \gets 0$\;
     $\ell' \gets \clonePath(\ell)$\;
     $C_m \gets \getArrayPointerC(m,\ell')$\;
     set $\arrayAndBranch{C}{m}{0}[\phaseVar \mod 2] \gets 1$\;
  }
  (\tcp*[h]{exactly one fork is good})
  {
    \eIf{$\contForks[\ell][0] = \True$}
    {
      set $\arrayAndBranch{C}{m}{0}[\phaseVar \mod 2] \gets 0$\;
    }
    {
      set $\arrayAndBranch{C}{m}{0}[\phaseVar \mod 2] \gets 1$\;
    }
  }

}
\end{algorithm}

Algorithm~\ref{alg:list_continueUnfrozen} is the analog of lines~\ref{alg:firstImplementation_main:unfrozenStart}--\ref{alg:firstImplementation_main:unfrozenEnd} in Algorithm~\ref{alg:improvedSC_main}. However, now, instead of choosing the most likely fork out of $2$ possible forks, we must typically choose the $L$ most likely forks out of $2L$ possible forks. The most interesting line is ~\ref{alg:list_continueUnfrozen:pivot}, in which the best $\rho$ forks are marked. Surprisingly\footnote{The $O(L)$ time result is rather theoretical. Since $L$ is typically a small number, the fastest way to achieve our selection goal would be through simple sorting.}, this can be done in $O(L)$ time \cite[Section 9.3]{CLRS:01b}. After the forks are marked, we first kill the paths for which both forks are discontinued, and then continue paths for which one or both are the forks are marked. In case of the latter, the path is first split. Note that we must first kill paths and only then split paths in order for the ``balanced'' constraint (\ref{eq:balanced}) to hold. Namely, this way, we will not have more than $L$ active paths at a time.

The point of Algorithm~\ref{alg:list_continueUnfrozen} is to prune our list and leave only the $L$ ``best'' paths. This is indeed achieved, in the following sense. At stage $\phaseVar$ we would like to rank each path according the the probability
\[
W_m^{(\phaseVar)}(\sfby_0^{n-1}, \hat{\sfbu}_{0}^{\phaseVar-1}|\hat{\sfu}_\phaseVar) \; . 
\]
By (\ref{eq:probArray}) and (\ref{eq:PReplace}), this would indeed by the case if our floating point variables were ``perfect'', and the normalization step in lines~\ref{alg:listSC_calcP:normalizationStart}--\ref{alg:listSC_calcP:normalizationEnd} of Algorithm~\ref{alg:listSC_calcP} were not carried out. By Lemma~\ref{lemm:normalization}, we see that this is still the case if normalization is carried out.

The last algorithm we consider in this section is Algorithm~\ref{alg:findMostProbablePath}. In it, the most probable path is selected from the final list. As before, by (\ref{eq:probArray})--(\ref{eqn:BCReplace}) and Lemma~\ref{lemm:normalization}, the value of $P_m[0][C_m[0][1]]$ is simply
\[
W_m^{(n-1)}(\bfgenericReceived_0^{n-1}, \hat{\bfu}_{0}^{n-2}|\hat{u}_{n-1}) = \frac{1}{2^{n-1}} \cdot P(  \bfy_0^{n-1} |\hat{\bfu}_0^{n-1} ) \; ,
\]
up to a normalization constant.

\begin{algorithm}
\caption{findMostProbablePath$()$}
\label{alg:findMostProbablePath}
\KwOut{the index $\ell'$ of the most probable path}
\BlankLine
$\ell'$ $\gets$ $0$, $p'$ $\gets$ $0$\;
\For{$\ell = 0,1,\ldots,L-1$}
{
  \If{ $\pathIndexInactive(\ell)$}{\Continue\;}
  $C_m \gets \getArrayPointerC(m,\ell)$\;
  $P_m \gets \getArrayPointerP(m,\ell)$\;

  \If{$p' < P_m[0][C_m[0][1]]$}
  {
    $\ell'$ $\gets$ $\ell$, $p'$ $\gets$ $P_m[0][C_m[0][1]]$
  }

}
\Return $\ell'$ \;
\end{algorithm}

We now prove our two main result.

\begin{theo}
The space complexity of the SCL decoder is $O(L \cdot n)$.
\end{theo}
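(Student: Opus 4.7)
The plan is to bound the space occupied by every data structure that the SCL decoder keeps alive simultaneously, and show that the dominant term is $O(L \cdot n)$. I would partition the persistent storage into two classes: the ``bookkeeping'' structures allocated once in Algorithm~\ref{alg:initializeDataStructures}, and the ``bank'' arrays of probability pairs and bit pairs that they point to. Temporary allocations (such as \probForks\ and \contForks\ inside Algorithm~\ref{alg:list_continueUnfrozen}) are only $O(L)$ in size and are freed on return, so they do not contribute to the asymptotic bound.

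First I would handle the bookkeeping. The arrays $\arrayPointerP$, $\arrayPointerC$, $\pathIndexToArrayIndex$, $\arrayReferenceCount$, and $\inactiveArrayIndices$ are each of size $(m+1) \times L$ (with each stack bounded by $L$ entries), while $\activePath$ and $\inactivePathIndices$ have size $O(L)$. Using $m = \log_2 n$, these contribute a total of $O(L \log n)$ space, which is dominated by $O(L \cdot n)$.

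The main step is to account for the actual probability-pair and bit-pair banks. For each layer $0 \leq \layerVar \leq m$, the initialization loop allocates exactly $L$ probability-pair arrays of size $2^{m-\layerVar}$ and $L$ bit-pair arrays of the same size. Crucially, the lazy-copy mechanism of $\getArrayPointerP$ and $\getArrayPointerC$ only ever reassigns indices into this fixed bank (via $\Pop$ from $\inactiveArrayIndices[\layerVar]$), so no arrays are ever created beyond the ones allocated during $\initializeDataStructures$. Summing the bank sizes,
\begin{equation*}
\sum_{\layerVar=0}^{m} L \cdot 2^{m-\layerVar} \cdot O(1) = O\!\left(L \sum_{\layerVar=0}^{m} 2^{m-\layerVar}\right) = O(L \cdot 2^{m+1}) = O(L \cdot n),
\end{equation*}
and the same bound holds for the bit-pair bank.

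Combining the bookkeeping term $O(L \log n)$ with the bank term $O(L \cdot n)$ yields a total of $O(L \cdot n)$, as claimed. I do not anticipate a hard step here: the only subtlety is arguing that the lazy-copy mechanism does not silently allocate new storage, which follows immediately from Lemma~\ref{lemm:wellDefined} (items III and IV) together with the observation that every ``new'' operation in $\getArrayPointerP$/$\getArrayPointerC$ only relabels an already-allocated slot returned by $\Pop(\inactiveArrayIndices[\layerVar])$.
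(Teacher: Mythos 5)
Your proof is correct and follows essentially the same approach as the paper, which simply asserts that all data structures allocated in Algorithm~\ref{alg:initializeDataStructures} total $O(L \cdot n)$; you usefully spell out the geometric sum $\sum_{\layerVar=0}^{m} L \cdot 2^{m-\layerVar} = O(L \cdot n)$ and the argument that lazy copying never allocates beyond the fixed bank. The only item you omit is the $O(\log n)$ stack space for the recursion in \recursivelyCalcP\ and \recursivelyUpdateC, which the paper mentions but which is dominated in any case.
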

\begin{proof}
All the data-structures of our list decoder are allocated in Algorithm~\ref{alg:initializeDataStructures}, and it can be checked that the total space used by them is $O(L \cdot n)$. Apart from these, the space complexity needed in order to perform the selection operation in line~\ref{alg:list_continueUnfrozen:pivot} of Algorithm~\ref{alg:list_continueUnfrozen} is $O(L)$. Lastly, the various local variables needed by the algorithm take $O(1)$ space, and the stack needed in order to implement the recursion takes $O(\log n)$ space.
\end{proof}

\begin{theo}
The running time of the SCL decoder is $O(L \cdot n \log n)$.
\end{theo}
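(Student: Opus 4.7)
The plan is to account for the cost of the algorithm by source: (a) recursive probability and bit-array updates (recursivelyCalcP and recursivelyUpdateC), (b) the per-phase path-continuation functions (continuePaths\_FrozenBit and continuePaths\_UnfrozenBit), and (c) initialization and final selection. Since all of the low-level bookkeeping functions (getArrayPointer\_P/C, clonePath, killPath, assignInitialPath) touch $O(m+1) = O(\log n)$ layers, and many calls copy entire arrays, the main thing to verify is that the per-layer copies and the per-path loops compose to the same $O(L \cdot n \log n)$ bound as the $L=1$ version.

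First I would count invocations of recursivelyCalcP$(\lambda, \phi)$ across the entire algorithm. Exactly as in the standard SC analysis, the recursion from layer $\lambda+1$ to layer $\lambda$ is triggered only when the enclosing phase is even; a straightforward induction then gives $n / 2^{m-\lambda}$ calls at layer $\lambda$. Each such call consists of the loop over $L$ path indices, and for every active path it performs the branch loop of length $2^{m-\lambda}$ together with $O(1)$ constant-time arithmetic per branch; to this one must add up to three calls to getArrayPointerP/C, whose only non-constant branch is a copy of an array of length $O(2^{m-\lambda})$. Thus the cost of one call at layer $\lambda$ is $O(L \cdot 2^{m-\lambda})$, and summed over all calls at that layer we get $(n/2^{m-\lambda}) \cdot O(L \cdot 2^{m-\lambda}) = O(L \cdot n)$. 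Summing over the $m+1 = \log n + 1$ layers yields $O(L \cdot n \log n)$. The same accounting applies verbatim to the normalization block inside recursivelyCalcP and to recursivelyUpdateC, since both have identical structure: an outer loop of length $L$ whose inner body is $O(2^{m-\lambda})$ work at layer $\lambda$, called $n/2^{m-\lambda}$ times overall.

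Next I would bound the $n$ invocations of continuePaths\_FrozenBit and continuePaths\_UnfrozenBit. Each one loops over the $L$ path indices performing $O(1)$ arithmetic per path, and triggers at most $L$ calls to getArrayPointerC (each $O(\log n)$ in the worst case, but here only at layer $m$ so $O(1)$ for the copy), together with up to $L$ invocations of clonePath and killPath, which do $O(\log n)$ work each by looping over all layers. The selection of the top $\rho$ entries of probForks on line~\ref{alg:list_continueUnfrozen:pivot} can be done in $O(L)$ time by the standard linear-time selection algorithm cited. Consequently, each phase contributes $O(L \log n)$, and over $n$ phases this totals $O(L \cdot n \log n)$.

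Finally, initializeDataStructures allocates $O(L \cdot n)$ cells of memory in $O(L \cdot n)$ time, the preamble loop initializing $P_0$ in Algorithm~\ref{alg:list_main} costs $O(n)$, and findMostProbablePath costs $O(L \log n)$. Combining the three contributions gives $O(L \cdot n \log n)$, which is the asserted bound. The one subtle point worth underlining, and the only place where the list version could a priori blow up, is that a single call to getArrayPointerP/C at layer $\lambda$ may perform an $O(2^{m-\lambda})$ array copy; the reason this is harmless is precisely that the surrounding per-path work at that layer is already $\Theta(2^{m-\lambda})$, so the copies are absorbed into the same asymptotic bound without the need for amortized analysis.
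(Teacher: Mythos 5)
Your proposal is correct and follows essentially the same accounting as the paper: the paper tabulates per-function costs and bounds the total cost of the recursive mid-level calls via the recurrence $f(\lambda)=2^{\lambda}\cdot O(L\cdot 2^{m-\lambda})+f(\lambda-1)$, which is exactly your per-layer sum ($2^{\lambda}$ invocations at layer $\lambda$, each costing $O(L\cdot 2^{m-\lambda})$, hence $O(L\cdot n)$ per layer and $O(L\cdot n\log n)$ overall). Your closing observation that the $O(2^{m-\lambda})$ array copies in \getArrayPointerP/\getArrayPointerC are absorbed by the surrounding per-path work is precisely the point the paper emphasizes about why shrinking the arrays makes the list decoder fast.
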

\begin{proof}
Recall that by our notation $m = \log n$. The following bottom-to-top table summarizes the running time of each function. The notation $O_\Sigma$ will be explained shortly.
\begin{center}
\begin{tabular}{ll}
\hline 
function & running time\\
\hline \hline
$\initializeDataStructures()$ & $O(L \cdot m)$ \\
$\assignInitialPath()$ & $O(m)$ \\
$\clonePath(\ell)$ & $O(m)$ \\
$\killPath(\ell)$ & $O(m)$ \\
$\getArrayPointerP(\layerVar,\ell)$ & $O(2^{m-\layerVar})$ \\
$\getArrayPointerC(\layerVar,\ell)$ & $O(2^{m-\layerVar})$ \\
$\pathIndexInactive(\ell)$ & $O(1)$ \\
\hline
$\recursivelyCalcP(m, \cdot)$ & $O_\Sigma(L \cdot m \cdot n)$ \\
$\recursivelyUpdateC(m, \cdot)$ & $O_\Sigma(L \cdot m \cdot n)$ \\
\hline
$\continuePathsFrozenBit(\phaseVar)$ & $O(L)$ \\
$\continuePathsFrozenBit(\phaseVar)$ & $O(L \cdot m)$ \\
$\findMostProbablePath$ & $O(L)$ \\
\hline
SCL decoder & $O(L \cdot m \cdot n )$ \\
\hline
\end{tabular}
\end{center}
The first $7$ functions in the table, the low-level functions, are easily checked to have the stated running time. Note that the running time of $\getArrayPointerP$ and $\getArrayPointerC$ is due to the copy operation in line~\ref{alg:getArrayPointerP:copy} of Algorithm~\ref{alg:getArrayPointerP:copy} applied to an array of size $O(2^{m-\layerVar})$. Thus, as was previously mentioned, reducing the size of our arrays has helped us reduce the running time of our list decoding algorithm.

Next, let us consider the two mid-level functions, namely, $\recursivelyCalcP$ and $\recursivelyUpdateC$. The notation
\[
\recursivelyCalcP(m, \cdot) \in O_\Sigma(L \cdot m \cdot n)
\]
means that total running time of the $n$ function calls
\[
\recursivelyCalcP(m, \phaseVar) \; , \quad 0 \leq \phaseVar < 2^m
\]
is $O(L \cdot m \cdot n)$. To see this, denote by $f(\lambda)$ the total running time of the above with $m$ replaced by $\layerVar$. By splitting the running time of Algorithm~\ref{alg:listSC_calcP} into a non-recursive part and a recursive part, we have that for $\layerVar > 0$
\[
f(\lambda) = 2^\layerVar \cdot O(L \cdot 2^{m-\layerVar}) + f(\lambda-1) \; .
\]
Thus, it easily follows that
\[
f(m) \in O(L \cdot m \cdot 2^m) = O(L \cdot m \cdot n) \; .
\]
In essentially the same way, we can prove that the total running time of the $\recursivelyUpdateC(m, \phaseVar)$ over all $2^{n-1}$ valid (odd) values of $\phaseVar$ is $O(m \cdot n)$. Note that the two mid-level functions are invoked in lines~\ref{alg:list_main:recursivelyCalcP} and \ref{alg:list_main:recursivelyUpdateC} of Algorithm~\ref{alg:list_main}, on all valid inputs.

The running time of the high-level functions is easily checked to agree with the table.

\end{proof}

\section{Modified polar codes}
\label{sec:listCRC}

\begin{figure}
\begin{center}
\includegraphics[scale=0.50]{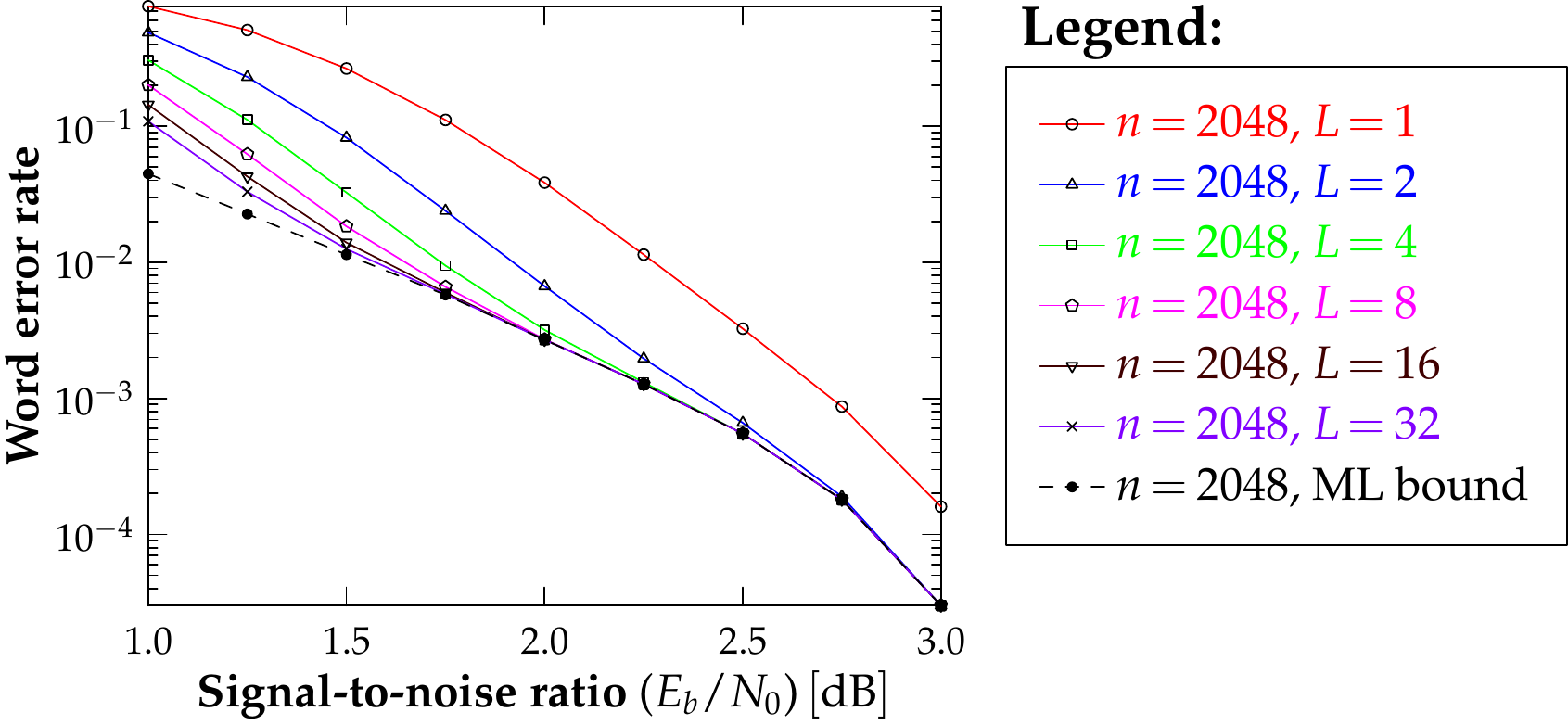}

\vspace{10pt}

\includegraphics[scale=0.50]{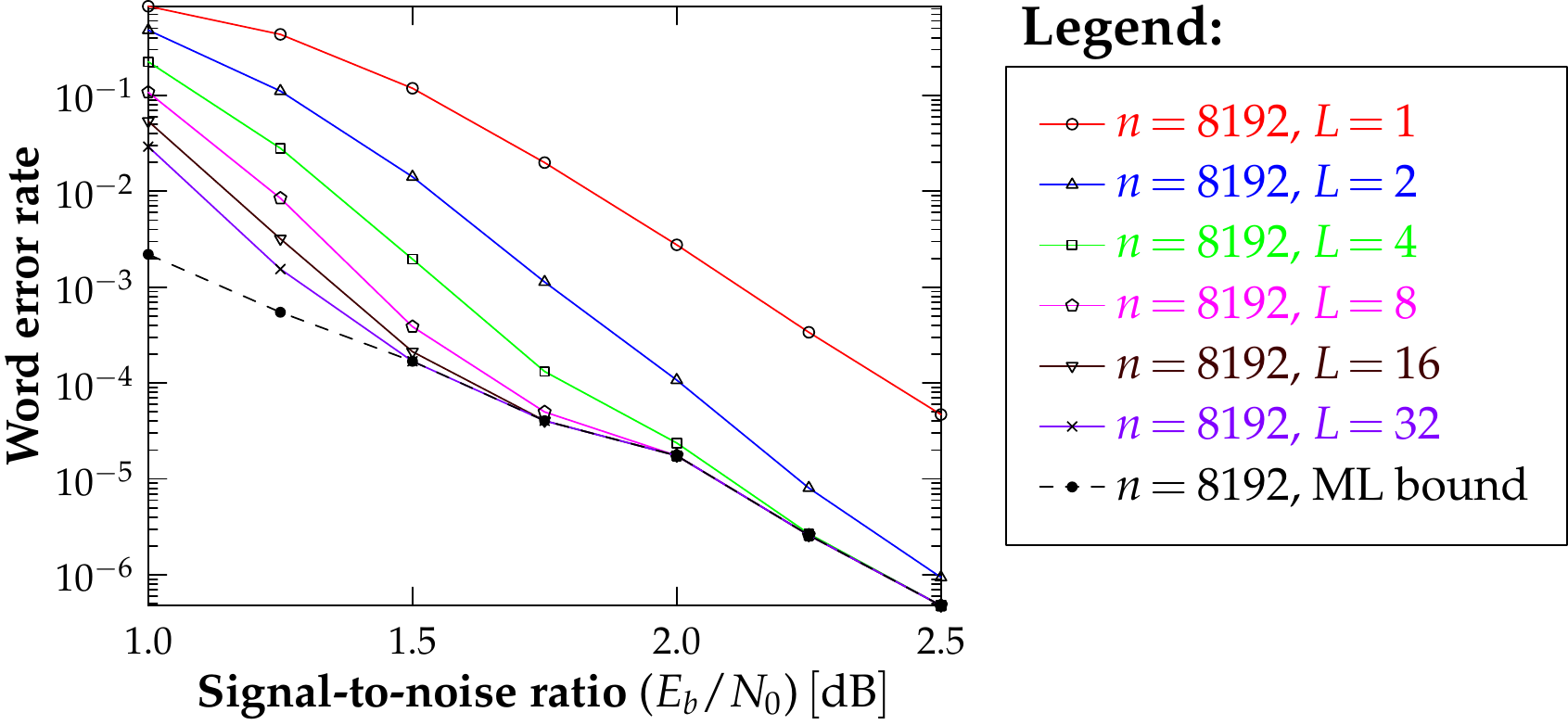}
\caption{Word error rate of a length $n=2048$ (top) and $n=8192$ (bottom) rate $1/2$ polar code optimized for SNR=$2$ dB under various list sizes. Code construction was carried out via the method proposed in \cite{TalVardy:11a}.}
\label{fig:WERNoCRC}
\end{center}
\end{figure}

The plots in Figure~\ref{fig:WERNoCRC} were obtained by simulation. The performance of our decoder for various list sizes is given by the solid lines in the figure. As expected, we see that as the list size $L$ increases, the performance of our decoder improves. We also notice a diminishing-returns phenomenon in terms of increasing the list size. The reason for this turns out to be simple.

The dashed line, termed the ``ML bound'' was obtained as follows. During our simulations for $L=32$, each time a decoding failure occurred, we checked whether the decoded codeword was more likely than the transmitted codeword. That is, whether $W(\sfby|\hat{\sfbc}) > W(\sfby|\sfbc)$. If so, then the optimal ML decoder would surely  misdecode $\sfby$ as well. The dashed line records the frequency of the above event, and is thus a lower-bound on the error probability of the ML decoder. Thus, for an SNR value greater than about $1.5 \; \mathrm{dB}$, Figure~\ref{fig:WER} suggests that we have an essentially optimal decoder when $L=32$.

Can we do even better? At first, the answer seems to be an obvious ``no'', at least for the region in which our decoder is essentially optimal. However, it turns out that if we are willing to accept a small change in our definition of a polar code, we can dramatically improve performance.

During simulations we noticed that often, when a decoding error occurred, the path corresponding to the transmitted codeword was a member of the final list. However, since there was a more likely path in the list, the codeword corresponding to that path was returned, which resulted in a decoding error. Thus, if only we had a ``genie'' to tell as at the final stage which path to pick from our list, we could improve the performance of our decoder.

Luckily, such a genie is easy to implement. Recall that we have $k$ unfrozen bits that we are free to set. Instead of setting all of them to information bits we wish to transmit, we employ the following simple concatenation scheme. For some small constant $r$, we set the first $k-r$ unfrozen bits to information bits. The last $r$ unfrozen bits will hold the $r$-bit CRC \cite[Section 8.8]{PetersonWeldon:72b} value\footnote{A binary linear code having a corresponding $k \times r$  parity-check matrix constructed as follows will do just as well. Let the  the first $k-r$ columns be chosen at random and the last $r$ columns be equal to the identity matrix.} of the first $k-r$ unfrozen bits. Note this new encoding is a slight variation of our polar coding scheme. Also, note that we incur a penalty in rate, since the rate of our code is now $(k-r)/n$ instead of the previous $k/n$.

What we have gained is an approximation to a genie: at the final stage of decoding, instead of calling the function $\findMostProbablePath$ in Algorithm~\ref{alg:findMostProbablePath}, we can do the following. A path for which the CRC is invalid can not correspond to the transmitted codeword. Thus, we refine our selection as follows. If at least one path has a correct CRC, then we remove from our list all paths having incorrect CRC and then choose the most likely path. Otherwise, we select the most likely path in the hope of reducing the number of bits in error, but with the knowledge that we have at least one bit in error. 

Figures~\ref{fig:WER} and \ref{fig:WiMax} contain a comparison of decoding performance between the original polar codes and the slightly tweaked version presented in this section. A further improvement in bit-error-rate (but not in block-error-rate) is attained when the decoding is performed systematically \cite{Arikan:11p}. The application of systematic polar-coding to a list decoding setting is attributed to \cite{SarkisGross:11pc}.

\twobibs{
\bibliographystyle{IEEEtran}
\bibliography{../../mybib}

\begin{thebibliography}{10}
\providecommand{\url}[1]{#1}
\csname url@samestyle\endcsname
\providecommand{\newblock}{\relax}
\providecommand{\bibinfo}[2]{#2}
\providecommand{\BIBentrySTDinterwordspacing}{\spaceskip=0pt\relax}
\providecommand{\BIBentryALTinterwordstretchfactor}{4}
\providecommand{\BIBentryALTinterwordspacing}{\spaceskip=\fontdimen2\font plus
\BIBentryALTinterwordstretchfactor\fontdimen3\font minus
  \fontdimen4\font\relax}
\providecommand{\BIBforeignlanguage}[2]{{%
\expandafter\ifx\csname l@#1\endcsname\relax
\typeout{** WARNING: IEEEtran.bst: No hyphenation pattern has been}%
\typeout{** loaded for the language `#1'. Using the pattern for}%
\typeout{** the default language instead.}%
\else
\language=\csname l@#1\endcsname
\fi
#2}}
\providecommand{\BIBdecl}{\relax}
\BIBdecl

\bibitem{Arikan:09p}
E.~Ar{\i}kan, ``Channel polarization: A method for constructing
  capacity-achieving codes for symmetric binary-input memoryless channels,''
  \emph{IEEE Trans. Inform. Theory}, vol.~55, pp. 3051--3073, 2009.

\bibitem{ArikanTelatar:09c}
E.~Ar{\i}kan and E.~Telatar, ``On the rate of channel polarization,'' in
  \emph{Proc. IEEE Int'l Symp. Inform. Theory (ISIT'2009)}, Seoul, South Korea,
  2009, pp. 1493--1495.

\bibitem{KSU:10p}
S.~B. Korada, E.~\c{S}a\c{s}o\u{g}lu, and R.~Urbanke, ``Polar codes:
  Characterization of exponent, bounds, and constructions,'' \emph{IEEE Trans.
  Inform. Theory}, vol.~56, pp. 6253--6264, 2010.

\bibitem{TalVardy:11a}
I.~Tal and A.~Vardy, ``How to construct polar codes,'' \emph{submitted to IEEE
  Trans. Inform. Theory, available online as
  \textup{\texttt{arXiv:1105.6164v2}}}, 2011.

\bibitem{WiechmanSason:08p}
G.~Wiechman and I.~Sason, ``An improved sphere-packing bound for finite-length
  codes over symmetric memoryless channels,'' \emph{IEEE Trans. Inform.
  Theory}, vol.~54, pp. 1962--1990, 2008.

\bibitem{Turbobest:00u}
\BIBentryALTinterwordspacing
TurboBest, ``{IEEE 802.16e LDPC Encoder/Decoder Core}.'' [Online]. Available:
  \url{http://www.turbobest.com/tb_ldpc80216e.htm}
\BIBentrySTDinterwordspacing

\bibitem{PPV:10p}
Y.~Polyanskiy, H.~V. Poor, and S.~Verd\'{u}, ``Channel coding rate in the
  finite blocklength regime,'' \emph{IEEE Trans. Inform. Theory}, vol.~56, pp.
  2307--2359, 2010.

\bibitem{LTVG:10a}
C.~Leroux, I.~Tal, A.~Vardy, and W.~J. Gross, ``Hardware architectures for
  successive cancellation decoding of polar codes,''
  \emph{\textup{\texttt{arXiv:1011.2919v1}}}, 2010.

\bibitem{DumerShabunov:06p}
I.~Dumer and K.~Shabunov, ``Soft-decision decoding of {R}eed-{M}uller codes:
  recursive lists,'' \emph{IEEE Trans. Inform. Theory}, vol.~52, pp.
  1260--1266, 2006.

\bibitem{CLRS:01b}
T.~H. Cormen, C.~E. Leiserson, R.~L. Rivest, and C.~Stein, \emph{Introduction
  to Algorithms}, 2nd~ed.\hskip 1em plus 0.5em minus 0.4em\relax Cambridge,
  Massachusetts: The MIT Press, 2001.

\bibitem{PetersonWeldon:72b}
W.~W. Peterson and E.~J. Weldon, \emph{Error-Correcting Codes}, 2nd~ed.\hskip
  1em plus 0.5em minus 0.4em\relax Cambridge, Massachusetts: The MIT Press,
  1972.

\bibitem{Arikan:11p}
E.~Ar{\i}kan, ``Systematic polar coding,'' \emph{IEEE Commmun. Lett.}, vol.~15,
  pp. 860--862, 2011.

\bibitem{SarkisGross:11pc}
G.~Sarkis and W.~J. Gross, ``Systematic encoding of polar codes for list
  decoding,'' 2011, private communication.

\end{thebibliography}
}
{

}
\end{document}